\newtheorem{claim}{}[section]
\newtheorem{theorem}[claim]{Theorem}
\newtheorem{lemma}[claim]{Lemma}
\newtheorem{proposition}[claim]{Proposition}
\newtheorem{corollary}[claim]{Corollary}
\theoremstyle{remark}
\renewenvironment{proof}{\noindent{\it Proof. \hskip0pt}}
                      {$\square$\par\medskip}
\begin{document}
\baselineskip 6.0 truemm
\parindent 1.5 true pc

\newcommand\lan{\langle}
\newcommand\ran{\rangle}
\newcommand\tr{{\text{\rm Tr}}\,}
\newcommand\ot{\otimes}
\newcommand\ol{\overline}
\newcommand\join{\vee}
\newcommand\meet{\wedge}
\renewcommand\ker{{\text{\rm Ker}}\,}
\newcommand\image{{\text{\rm Im}}\,}
\newcommand\id{{\text{\rm id}}}
\newcommand\tp{{\text{\rm tp}}}
\newcommand\pr{\prime}
\newcommand\e{\epsilon}
\newcommand\la{\lambda}
\newcommand\inte{{\text{\rm int}}\,}
\newcommand\ttt{{\text{\rm t}}}
\newcommand\spa{{\text{\rm span}}\,}
\newcommand\conv{{\text{\rm conv}}\,}
\newcommand\rank{\ {\text{\rm rank of}}\ }
\newcommand\re{{\text{\rm Re}}\,}
\newcommand\ppt{\mathbb T}
\newcommand\rk{{\text{\rm rank}}\,}
\newcommand\SN{{\text{\rm SN}}\,}
\newcommand\SR{{\text{\rm SR}}\,}
\newcommand\HA{{\mathcal H}_A}
\newcommand\HB{{\mathcal H}_B}
\newcommand\HC{{\mathcal H}_C}
\newcommand\CI{{\mathcal I}}
\newcommand{\bra}[1]{\langle{#1}|}
\newcommand{\ket}[1]{|{#1}\rangle}
\newcommand\cl{\mathcal}
\newcommand\idd{{\text{\rm id}}}
\newcommand\OMAX{{\text{\rm OMAX}}}
\newcommand\OMIN{{\text{\rm OMIN}}}
\newcommand\diag{{\text{\rm Diag}}\,}
\newcommand\calI{{\mathcal I}}
\newcommand\bfi{{\bf i}}
\newcommand\bfj{{\bf j}}
\newcommand\bfk{{\bf k}}
\newcommand\bfl{{\bf l}}
\newcommand\bfp{{\bf p}}
\newcommand\bfq{{\bf q}}
\newcommand\bfzero{{\bf 0}}
\newcommand\bfone{{\bf 1}}
\newcommand\im{{\mathcal R}}
\newcommand\ha{{\frac 12}}
\newcommand{\para}{\mathbin{\!/\mkern-3mu/\!}}

\title{The role of phases in detecting three-qubit entanglement}

\author{Kyung Hoon Han and Seung-Hyeok Kye}
\address{Department of Mathematics, The University of Suwon, Gyeonggi-do 445-743, Korea}
\email{kyunghoon.han at gmail.com}
\address{Department of Mathematics and Institute of Mathematics, Seoul National University, Seoul 151-742, Korea}
\email{kye at snu.ac.kr}
\thanks{KHH and SHK were partially supported by NRF-2012R1A1A1012190 and 2015R1D1A1A02061612, respectively.}

\subjclass{81P15, 15A30, 46L05, 46L07}

\keywords{three qubit, separable states, {\sf X}-states, rank}

\begin{abstract}
We propose separability criteria for three-qubit states in terms of
diagonal and anti-diagonal entries to detect entanglement with
positive partial transposes.
We report that the phases, that is, the angular parts of anti-diagonal entries, play a crucial role in determining whether a given three-qubit state is separable or entangled, and they must obey even an identity for separability in some cases.
These criteria are strong enough to detect PPT (positive partial transpose) entanglement with nonzero volume. In several cases when
all the entries are zero except for diagonal and anti-diagonal entries,
we characterize separability using phases. These include the cases
when anti-diagonal entries of such states share a common magnitude,
and when ranks are less than or equal to six. We also compute the lengths of rank six cases,
and find three-qubit separable states with lengths $8$
whose maximum ranks of partial transposes are $7$.
\end{abstract}

\maketitle

\section{Introduction}

The notion of entanglement is a unique phenomenon in quantum
physics, and is now considered as one of the main resources in
various fields of current quantum information and computation
theory, like quantum cryptography and quantum teleportation. See
survey articles \cite{{guhne_survey},{horo-survey}} for general
aspects on the topics. In cases of multipartite systems, there are
several kinds of entanglement as it was classified in
\cite{{abls},{dur_multi},{dur}}, and it is important to find
separability criteria to distinguish entanglement from separability.
Positivity of partial transposes is a simple but powerful criterion
\cite{{choi-ppt},{peres}}.

Some of other criteria are to test quite simple relations between
diagonal and anti-diagonal entries of states. See
\cite{{guhne_pla_2011},{guhne10},{Rafsanjani}} for example. This
approach is very successful to detect kinds of multi-qubit
entanglement arising from bi-separability or full bi-separability,
and some of them actually characterize those separability when a
given state has zero entries except for diagonal and anti-diagonal
entries \cite{{guhne10},{han_kye_EW},{Rafsanjani}}. The main purpose
of this note is to give criteria for (full) separability of three
qubit states in terms of diagonal and anti-diagonal entries. These
criteria tell us that the (full) separability of three-qubit states
depends heavily on the phases, that is, the angular parts of
anti-diagonal entries. Furthermore, they detect PPT (positive partial transpose) entanglement
with nonzero volume. We recall that a state is said to be separable,
or fully separable, if it is a convex combination of pure product
states, or equivalently, that of product states.

By anti-diagonal entries of an $n\times n$ matrix $[a_{i,j}]$, we
mean $a_{i,n-i+1}$ for $i=1,2,\dots,n$.
States with zero entries except for diagonal and anti-diagonal
entries are usually called {\sf X}-shaped states, or {\sf X}-states,
in short. Those states arise naturally in quantum information theory
in various aspects. See \cite{{abls},{mendo},{rau},{vin10},{wein10},{yu}}
for example. Notable examples include Greenberger-Horne-Zeilinger
diagonal states, which are mixtures of GHZ states with noises. We
note that the {\sf X}-part of a three-qubit separable state is again
separable \cite{han_kye_GHZ}, and so any necessary criteria for
separability of {\sf X}-shaped states still work for arbitrary three
qubit states in terms of diagonal and anti-diagonal entries.
Very little is known for full separability of three-qubit X-states,
even though we now have a complete characterization of
bi-separability and full bi-separability of arbitrary multi-qubit {\sf X}-states \cite{{han_kye_EW},{Rafsanjani}}. Only recently, separability of three-qubit GHZ diagonal
states has been completely characterized by the authors
\cite{han_kye_GHZ}, complimenting earlier partial results in
\cite{{guhne_pla_2011},{kay_2011}}. We note that anti-diagonal
entries of GHZ diagonal states are real numbers, and so it is very
natural to ask what happens when the anti-diagonal part has complex
entries.

Three-qubit states are considered as $8\times 8$ matrices, by
the identification $M_8=M_2\otimes M_2\otimes M_2$ with the lexicographic order
of indices in the tensor product. Therefore,
a three-qubit {\sf X}-shaped Hermitian matrix is of the form
\begin{equation}
X(a,b,c)= \left(
\begin{matrix}
a_1 &&&&&&& c_1\\
& a_2 &&&&& c_2 & \\
&& a_3 &&& c_3 &&\\
&&& a_4&c_4 &&&\\
&&& \bar c_4& b_4&&&\\
&& \bar c_3 &&& b_3 &&\\
& \bar c_2 &&&&& b_2 &\\
\bar c_1 &&&&&&& b_1
\end{matrix}
\right),
\end{equation}
for $a,b\in\mathbb R^4$ and $c\in\mathbb C^4$.
We also denote by $\theta_i$ the phase of $c_i=r_ie^{{\rm i}\theta_i}$ throughout this note.
For a given state $\varrho$ whose {\sf X}-part is given by $X(a,b,c)$,
we define the following three numbers:
\begin{equation}
\begin{aligned}
\Delta_\varrho&=\min\{\sqrt{a_1b_1}, \sqrt{a_2b_2}, \sqrt{a_3b_3}, \sqrt{a_4b_4}, \sqrt[4]{a_1b_2b_3a_4}, \sqrt[4]{b_1a_2a_3b_4}\},\\
R_\varrho&=\max\{ |c_1|, |c_2|, |c_3|, |c_4| \},\\
r_\varrho&=\min\{ |c_1|, |c_2|, |c_3|, |c_4| \}.
\end{aligned}
\end{equation}

We note that if a three-qubit state $\varrho$ with the {\sf
X}-part $X(a,b,c)$ is separable then it satisfies the inequality
\begin{equation}\label{ppt-diag}
\Delta_\varrho\ge R_\varrho.
\end{equation}
The inequality $\sqrt{a_ib_i}\ge R_\varrho$ comes from the
positivity of partial transposes, as it was also observed for
general multi-qubit states \cite{han_kye_EW}. The others appear in
\cite{guhne_pla_2011}. This criterion gives us a restriction on the
maximum of magnitudes of anti-diagonal entries for separable states.
We consider here the minimum of the anti-diagonal magnitudes, and
show that a separable state $\varrho$ must satisfy the inequality:
\begin{equation}\label{Phi}
\Delta_\varrho\ge r_\varrho \sqrt{1+|\sin \phi_\varrho/2|},
\end{equation}
which depends on the {\sl phase difference} $\phi_\varrho$ of $\varrho$ defined by
\begin{equation}
\phi_\varrho=(\theta_1+\theta_4)-(\theta_2+\theta_3) \mod 2\pi.
\end{equation}
If the phase difference is nonzero then our criterion (\ref{Phi})
detects entanglement with PPT property. Actually, we will see that
the set of all PPT entanglement detected by (\ref{Phi}) has nonzero
volume.

For given fixed diagonal parts and
magnitudes of anti-diagonals, our criterion also gives rise to a restriction
on the phase difference for a separable state. Especially, if
$\Delta_\varrho=R_\varrho=r_\varrho$, then separable states must
obey the {\sl phase identity}:
\begin{equation}
\theta_1+\theta_4=\theta_2+\theta_3 \mod 2\pi
\end{equation}
which can be easily observed
for pure product states \cite{guhne_pla_2011}.
For  non-diagonal {\sf X}-states with rank $\le 6$, we will also see that
the phase identity is a necessary condition for separability,
with which we characterize separability of
them in terms of entries. If a three-qubit {\sf X}-state $\varrho$ shares
a common magnitude, that is, $R_\varrho=r_\varrho$,
then our criterion (\ref{Phi}) characterize separability.

After we prove the criterion (\ref{Phi}) in the next section, we
apply this result in Section 3 to give a complete characterization
of separability for three-qubit {\sf X}-states with rank four. This
will be used in Section 4 to show that the criterion (\ref{Phi})
gives us a complete characterization of separability for {\sf
X}-states with common anti-diagonal magnitudes. In Section 5, we
will show that separable {\sf X}-states with rank $\le 6$ must
satisfy the phase identity, and characterize their separability.
We compute in Section 6 the lengths of separable {\sf X}-states of rank six.
In some cases, the length exceeds the maximum rank of partial transposes.

\section{Separability criterion with anti-diagonal phases}

In order to justify the criterion (\ref{Phi}), we begin with the
separability criterion by G\" uhne \cite{{guhne_pla_2011}} which was
simplified by the authors \cite{han_kye_GHZ}. It was shown that
every three-qubit separable state $\varrho$ with the {\sf X}-part
$X(a,b,c)$ satisfies the inequality
\begin{equation}\label{guhne}
|{\rm Re} \left(z_1 c_1 + z_2 c_2 + z_3 c_3 + z_4 \bar c_4 \right)|
\le\Delta_\varrho \max_\tau\left(|z_1 e^{{\rm i}\tau}+z_4|+ |z_2
e^{{\rm i}\tau}+\bar z_3|\right)
\end{equation}
for each $z\in\mathbb C^4$.
Note that the number in the right side of (\ref{guhne}) appear in the
characterization of {\sf X}-shaped three-qubit entanglement witnesses
\cite{han_kye_tri}, which correspond to positive bi-linear maps
between $2\times 2$ matrices \cite{kye_3qb_EW}.
In order to get a preliminary criterion, we introduce the number
\begin{equation}
A_\varrho =\frac 1{2\sqrt 2}\max_\theta\left(|c_1 e^{{\rm i}\theta}
+c_2| + |c_3e^{{\rm i}\theta}-c_4|\right)
\end{equation}
which is determined by the anti-diagonal parts.

For a three-qubit state $\varrho$, we denote by $\varrho^{\Gamma_A}$
the partial transpose with respect to the system $A$, and
$\varrho^{\Gamma_B}$, $\varrho^{\Gamma_C}$ similarly.
Then we have
$$
\begin{aligned}
X(a,b,(c_1,c_2,c_3,c_4))^{\Gamma_A}&=X(a,b,(\bar c_4,\bar c_3,\bar c_2,\bar c_1)),\\
X(a,b,(c_1,c_2,c_3,c_4))^{\Gamma_B}&=X(a,b,(c_3,c_4,c_1,c_2)),\\
X(a,b,(c_1,c_2,c_3,c_4))^{\Gamma_C}&=X(a,b,(c_2,c_1,c_4,c_3)).
\end{aligned}
$$
Therefore, we see that the number $A_\varrho$ is invariant under all the kinds of
partial transposes.

\begin{proposition}
If $\varrho$ is a three-qubit separable state with its {\sf X}-part $X(a,b,c)$,
then $\varrho$ satisfies the inequality
\begin{equation}\label{theta-delta}
\Delta_\varrho\ge A_\varrho.
\end{equation}
\end{proposition}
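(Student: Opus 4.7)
The plan is to bypass the G\"uhne inequality~(\ref{guhne}) and work directly with the convex--analytic structure of the two sides of~(\ref{theta-delta}). First I would verify the proposition with \emph{equality} on every pure product state, and then propagate the inequality to arbitrary separable states by showing that $\varrho\mapsto\Delta_\varrho$ is concave while $\varrho\mapsto A_\varrho$ is convex on the state space.

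On a pure product state $\varrho=|\psi_A\psi_B\psi_C\rangle\langle\psi_A\psi_B\psi_C|$ with $|\psi_X\rangle=a_X|0\rangle+b_X|1\rangle$ for $X\in\{A,B,C\}$, a direct expansion of the density matrix in the computational basis shows that the four anti-diagonal entries share a common modulus $r:=|a_Ab_A||a_Bb_B||a_Cb_C|$, and that each of the six expressions $\sqrt{a_ib_i}$, $\sqrt[4]{a_1b_2b_3a_4}$, $\sqrt[4]{b_1a_2a_3b_4}$ defining $\Delta_\varrho$ also collapses to this same $r$. Hence $\Delta_\varrho=r$. A short phase count yields $\theta_1-\theta_2=\theta_3-\theta_4=2(\arg a_C-\arg b_C)=:2\gamma_C$, so that the expression inside $A_\varrho$ simplifies to $2r\bigl(|\cos(\theta/2+\gamma_C)|+|\sin(\theta/2+\gamma_C)|\bigr)$; its maximum in $\theta$ is $2\sqrt2\,r$, which gives $A_\varrho=r=\Delta_\varrho$.

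The second step is the convexity/concavity observation. The diagonal entries $a_i,b_i$ depend linearly on $\varrho$, and each geometric mean $\sqrt{xy}$, $(xyzw)^{1/4}$ is concave on the respective positive orthant, so every one of the six summands defining $\Delta_\varrho$ is concave in $\varrho$, and hence so is their minimum. Dually, the anti-diagonal entries $c_i$ are linear in $\varrho$, so for each fixed $\theta$ the map $\varrho\mapsto|c_1e^{{\rm i}\theta}+c_2|+|c_3e^{{\rm i}\theta}-c_4|$ is convex; taking the supremum over $\theta$ preserves convexity, so $A_\varrho$ is convex in $\varrho$.

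Combining the pieces, for any separable decomposition $\varrho=\sum_kp_k\varrho_k$ with $\varrho_k$ pure and product, concavity of $\Delta$ yields $\Delta_\varrho\ge\sum_kp_k\Delta_{\varrho_k}$, convexity of $A$ yields $A_\varrho\le\sum_kp_kA_{\varrho_k}$, and the first step gives $\Delta_{\varrho_k}=A_{\varrho_k}$ term by term, so these chain to $\Delta_\varrho\ge A_\varrho$. The main obstacle lies in Step~1: one has to bookkeep the single-qubit phases $\arg a_X,\arg b_X$ carefully enough to spot the cancellations $\theta_1-\theta_2=\theta_3-\theta_4$ that let the $\theta$-maximum in $A_\varrho$ telescope down to the common anti-diagonal magnitude $r$.
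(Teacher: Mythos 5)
Your argument is correct, and it takes a genuinely different route from the paper's. The paper deduces the proposition from G\"uhne's separability inequality (\ref{guhne}): for each fixed $\theta$ it chooses a test vector $z\in\mathbb C^4$ whose phases are tuned so that the left-hand side of (\ref{guhne}) collapses to $|c_1e^{{\rm i}\theta}+c_2|+|c_3e^{{\rm i}\theta}-c_4|$, while the maximum over $\tau$ on the right-hand side evaluates to $2\sqrt2$; taking the supremum over $\theta$ then gives $\Delta_\varrho\ge A_\varrho$. You instead bypass (\ref{guhne}) entirely: you check that $\Delta$ and $A$ agree on every pure product state, both being equal to the common anti-diagonal modulus $r=|x_0x_1y_0y_1z_0z_1|$ (your phase identity $\theta_1-\theta_2=\theta_3-\theta_4$ is exactly what makes the $\theta$-maximum equal $2\sqrt2\,r$), and then transport the inequality to arbitrary separable states using concavity of $\Delta$ (a minimum of geometric means of linear functionals of $\varrho$) and convexity of $A$ (a supremum over $\theta$ of sums of moduli of linear functionals). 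Both steps are sound; the degenerate case where the product vector has a zero entry gives $\Delta=A=0$ and causes no trouble, and homogeneity of degree one makes the normalization of the pure components irrelevant. What your approach buys is self-containedness and the extra observation that (\ref{theta-delta}) holds with \emph{equality} on the {\sf X}-parts of pure product states, which explains why the bound is saturated (compare Theorem \ref{main-equiv}); what the paper's approach buys is brevity, given that (\ref{guhne}) is already in hand, together with the explicit link to the {\sf X}-shaped entanglement witnesses that generate that inequality. In spirit your proof re-derives precisely the special case of (\ref{guhne}) that the paper actually uses, since criteria of that type are themselves obtained by verifying them on pure product states and invoking convexity.
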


\begin{proof}
For a given $\theta$, define $\phi = -\arg (c_1 e^{{\rm i}\theta} +
c_2)$ and $\psi = -\arg(c_3 e^{{\rm i} \theta}-c_4)$, and take
$z=(e^{{\rm i}(\theta+\phi)}, e^{{\rm i} \phi}, e^{{\rm i}(\theta+\psi)}, -e^{-{\rm i}\psi})\in\mathbb C^4$
in the the inequality (\ref{guhne}). We have ${\rm Re}(z_4\bar c_4)={\rm Re}(\bar z_4 c_4)$ and
$$
\begin{aligned}
c_1 z_1 + c_2 z_2 + c_3 z_3 + c_4 \bar z_4
& = (c_1 e^{{\rm i}\theta} + c_2)e^{{\rm i}\phi} + (c_3 e^{{\rm i}\theta} - c_4)e^{{\rm i}\psi} \\
& = |c_1 e^{{\rm i}\theta} +c_2| + |c_3e^{{\rm i}\theta}-c_4|
\end{aligned}
$$
in the left hand side. In the right hand side of (\ref{guhne}), we see that
$$
| z_1 e^{{\rm i} \tau} + z_4|+| z_2 e^{{\rm i} \tau} + \bar z_3|
=|e^{{\rm i}(\theta + \phi + \psi + \tau)} -1| + |e^{{\rm i}(\phi + \theta + \psi + \tau )} +1|
$$
has the maximum $2\sqrt{2}$ through the variable $\tau$.
\end{proof}

Now, we consider the case when the anti-diagonal entries share a common
magnitude, say $R=|c_i|$ for each $i=1,2,3,4$. In this case, the
number $ A_\varrho$ has a natural geometric interpretation. To see
this, put
$$
\begin{aligned}
T(\theta): &=|e^{{\rm i}\theta_1}e^{{\rm i}\theta}+e^{{\rm
i}\theta_2}|
+|e^{{\rm i}\theta_3}e^{{\rm i}\theta}-e^{{\rm i}\theta_4}|\\
&=|e^{{\rm i}\theta}-e^{{\rm i}(\theta_2-\theta_1+\pi)}|+|e^{{\rm
i}\theta}-e^{{\rm i}(\theta_4-\theta_3)}|.
\end{aligned}
$$
Then the maximum of $T(\theta)$ occurs when the three points
$e^{{\rm i}(\theta_2-\theta_1+\pi)}$, $e^{{\rm i}(\theta_4-\theta_3)}$ and $e^{{\rm i}\theta}$
on the complex plane make an isosceles
triangle, and so we have $2\sqrt 2\le
\max_\theta T(\theta) \le 4$. See FIGURE 1. Since $
A_\varrho=R\max_\theta T(\theta)/2\sqrt 2$, we have
$$
R\le A_\varrho\le \sqrt 2\, R.
$$
Note that $ A_\varrho= R$ if and only if
$\max_\theta T(\theta)=2\sqrt 2$ if and only if two angles
$\theta_2-\theta_1+\pi$ and $\theta_4-\theta_3$ are antipodal if and
only if  $\theta_1+\theta_4=\theta_2+\theta_3$. We also have
$A_\varrho= \sqrt 2\, R$ if and only if $\theta_1+\theta_4$ and
$\theta_2+\theta_3$ are antipodal.

\setlength{\unitlength}{1 mm}
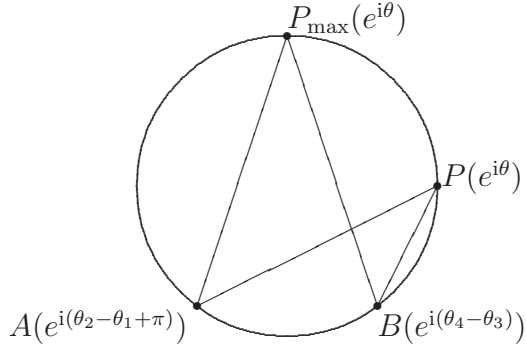
\begin{figure}
\setlength{\unitlength}{1 mm}
\begin{center}
\begin{picture}(50,50)
  \qbezier(25.000,0.000)(33.284,0.000)
          (39.142,5.858)
  \qbezier(39.142,5.858)(45.000,11.716)
          (45.000,20.000)
  \qbezier(45.000,20.000)(45.000,28.284)
          (39.142,34.142)
  \qbezier(39.142,34.142)(33.284,40.000)
          (25.000,40.000)
  \qbezier(25.000,40.000)(16.716,40.000)
          (10.858,34.142)
  \qbezier(10.858,34.142)( 5.000,28.284)
          ( 5.000,20.000)
  \qbezier( 5.000,20.000)( 5.000,11.716)
          (10.858,5.858)
  \qbezier(10.858,5.858)(16.716,0.000)
          (25.000,0.000)
\drawline(13,4)(25,40)
\drawline(37,4)(25,40)
\drawline(13,4)(45,20)
\drawline(37,4)(45,20)
\put(13,4){\circle*{1}}
\put(25,40){\circle*{1}}
\put(37,4){\circle*{1}}
\put(45,20){\circle*{1}}
\put(25,41){$P_{\max}(e^{{\rm i}\theta})$}
\put(45.5,20){$P(e^{{\rm i}\theta})$}
\put(-12,0){$A(e^{{\rm i}(\theta_2-\theta_1+\pi)})$}
\put(37,0){$B(e^{{\rm i}(\theta_4-\theta_3)})$}
\end{picture}
\end{center}
\caption{For two fix points $A$ and $B$ on the circle, the length
$\overline{AP}+\overline{PB}$ takes the maximum when the three
points $A,B$ and $P$ make an isosceles triangle. Furthermore, the
number $\max_P(\overline{AP}+\overline{PB})$ becomes largest when
$A$ and $B$ coincide, and smallest when $A$ and $B$ are antipodal.}
\end{figure}

Now, we proceed to find the maximum of $T(\theta)$. By the relation
$$
\max_\theta T(\theta)
=\max_\theta T(\theta-\theta_3+\theta_4)
=\max_\theta\{|e^{{\rm i}\theta} - e^{{\rm i}(-\phi_\varrho+\pi)}|+|e^{{\rm i}\theta}-1|\},
$$
we see that the maximum occurs when three points $1, e^{{\rm
i}(-\phi_\varrho+\pi)}$ and $e^{{\rm i}\theta}$ make an isosceles
triangle, that is, $\theta=-\phi_\varrho/2\pm  \pi/ 2$. Therefore,
we see that
$$
\begin{aligned}
\max_\theta T(\theta)
&=
\max_{\pm} \{|e^{{\rm i}(-\phi_\varrho/2\pm \pi /2)}
   - e^{{\rm i}(-\phi_\varrho+\pi)}|+|e^{{\rm i}(-\phi_\varrho/2\pm \pi /2)}-1|\},\\
&=
\max_\pm\, 2|e^{{\rm i}(-\phi_\varrho/2\pm \pi /2)}-1|\\
&=
\max_\pm 2\sqrt{2-2\cos(-\phi_\varrho/2 \pm \pi/2)}\\
&=
2\sqrt 2\max_\pm\sqrt{1\pm\sin \phi_\varrho/2}
=2\sqrt2\sqrt{1+|\sin\phi_\varrho/2|},
\end{aligned}
$$
whenever $R=|c_i|$ for each $i=1,2,3,4$.
Therefore, we have the following:

\begin{lemma}\label{T_theta}
We have $\max_\theta T(\theta)=2\sqrt{2}\sqrt{1+|\sin{\phi_\varrho / 2}|}$.
Therefore, if $|c_1|=|c_2|=|c_3|=|c_4|=R$, then $A_\varrho = R \sqrt{1+|\sin \phi_\varrho / 2}|$.
\end{lemma}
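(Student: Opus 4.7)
The plan is to prove the two claims in sequence: first the identity for $\max_\theta T(\theta)$ by a geometric/trigonometric reduction, and then the formula for $A_\varrho$ by substituting into its definition.

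First I would start from the rewritten form already displayed in the text,
$$T(\theta)=|e^{{\rm i}\theta}-e^{{\rm i}(\theta_2-\theta_1+\pi)}|+|e^{{\rm i}\theta}-e^{{\rm i}(\theta_4-\theta_3)}|,$$
and then perform the change of variable $\theta\mapsto \theta-\theta_3+\theta_4$, which leaves $\max_\theta T(\theta)$ unchanged. A direct computation shows that the two fixed points on the unit circle become $e^{{\rm i}(-\phi_\varrho+\pi)}$ and $1$, using the definition $\phi_\varrho=(\theta_1+\theta_4)-(\theta_2+\theta_3)\mod 2\pi$. So the problem reduces to maximizing $|e^{{\rm i}\theta}-e^{{\rm i}(-\phi_\varrho+\pi)}|+|e^{{\rm i}\theta}-1|$ over $\theta$.

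Next I would invoke the elementary geometric fact illustrated in Figure 1: for two fixed points $A,B$ on the unit circle, the function $P\mapsto \overline{AP}+\overline{PB}$ on the circle is maximized precisely when $P$ lies on the perpendicular bisector of $AB$, i.e.\ when $APB$ is an isosceles triangle. This can be verified quickly by squaring the sum and applying the Cauchy--Schwarz inequality, or just by calculus after parametrizing $P=e^{{\rm i}\theta}$. The midpoint of the arc between $1$ and $e^{{\rm i}(-\phi_\varrho+\pi)}$ gives $\theta=-\phi_\varrho/2\pm\pi/2$, and by symmetry both candidates must be considered.

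Then I would substitute these two values of $\theta$ back in. At each one, the two chord lengths coincide, so the sum equals $2|e^{{\rm i}(-\phi_\varrho/2\pm\pi/2)}-1|$. Using $|e^{{\rm i}\alpha}-1|^2=2-2\cos\alpha$ and the identity $\cos(-\phi_\varrho/2\pm\pi/2)=\pm\sin(-\phi_\varrho/2)=\mp\sin(\phi_\varrho/2)$, this becomes
$$2\sqrt{2\pm 2\sin(\phi_\varrho/2)}=2\sqrt{2}\sqrt{1\pm\sin(\phi_\varrho/2)}.$$
Taking the larger of the two values yields $2\sqrt{2}\sqrt{1+|\sin(\phi_\varrho/2)|}$, which is the desired maximum. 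Finally, in the case $|c_1|=|c_2|=|c_3|=|c_4|=R$ one has $c_ie^{{\rm i}\theta_0}=Re^{{\rm i}\theta_i}$ for each $i$, so the sum inside the definition of $A_\varrho$ factors out an $R$ and equals $R\,T(\theta)$; hence $A_\varrho=R\,\max_\theta T(\theta)/(2\sqrt{2})=R\sqrt{1+|\sin(\phi_\varrho/2)|}$, completing the proof.

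The main obstacle will be making the geometric maximization step rigorous and tracking signs correctly in the trigonometric reduction; everything else is bookkeeping. Once the reduction to two fixed points $1$ and $e^{{\rm i}(-\phi_\varrho+\pi)}$ is done, the identity $\phi_\varrho=(\theta_1+\theta_4)-(\theta_2+\theta_3)$ does all the work, which is precisely why the phase difference $\phi_\varrho$ is the right invariant to isolate.
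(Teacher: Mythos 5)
Your proposal is correct and follows essentially the same route as the paper: the same change of variable $\theta\mapsto\theta-\theta_3+\theta_4$ reducing to the two points $1$ and $e^{{\rm i}(-\phi_\varrho+\pi)}$, the same isosceles-triangle maximization, the same evaluation at $\theta=-\phi_\varrho/2\pm\pi/2$ via $|e^{{\rm i}\alpha}-1|^2=2-2\cos\alpha$, and the same factoring of $R$ out of the definition of $A_\varrho$. The only blemishes are cosmetic (a sign convention in $\cos(-\phi_\varrho/2\pm\pi/2)$ that is absorbed by the $\max_\pm$, and the stray factor $e^{{\rm i}\theta_0}$ in $c_i=Re^{{\rm i}\theta_i}$), neither of which affects the argument.
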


Considering the general cases with
arbitrary magnitudes of anti-diagonal entries, we get the main criterion
{\rm (\ref{Phi})}. In fact, we show that the inequality
(\ref{theta-delta}) implies the inequality (\ref{Phi}).
Note that the number $|\sin\phi_\varrho/2|$ in the criterion is
invariant under three kinds of partial transposes of $\varrho$.

\begin{theorem}
If $\varrho$ is a three-qubit separable state with the {\sf X}-part $X(a,b,c)$,
then the inequality {\rm (\ref{Phi})} holds.
\end{theorem}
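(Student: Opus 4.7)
The plan is to reduce everything to the common--magnitude case already handled by Lemma~\ref{T_theta}. By the Proposition we have $\Delta_\varrho\ge A_\varrho$, so it suffices to prove $A_\varrho\ge r_\varrho\sqrt{1+|\sin\phi_\varrho/2|}$. Writing $c_i=r_i e^{{\rm i}\theta_i}$ and recalling $r_\varrho=\min_i r_i$, the right-hand side of the target inequality is precisely what $A_\varrho$ would equal if every $r_i$ were replaced by $r_\varrho$. Thus the whole task becomes a pointwise-in-$\theta$ comparison: show that for each $\theta$,
\begin{equation*}
|c_1 e^{{\rm i}\theta}+c_2|\ \ge\ r_\varrho\,|e^{{\rm i}(\theta+\theta_1)}+e^{{\rm i}\theta_2}|,\qquad
|c_3 e^{{\rm i}\theta}-c_4|\ \ge\ r_\varrho\,|e^{{\rm i}(\theta+\theta_3)}-e^{{\rm i}\theta_4}|.
\end{equation*}
If these hold, taking the maximum over $\theta$ and dividing by $2\sqrt{2}$ gives $A_\varrho\ge (r_\varrho/2\sqrt 2)\max_\theta T(\theta)$, and Lemma~\ref{T_theta} finishes the proof.

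For the pointwise step I would work with squares and reduce to a single trigonometric identity. Using $|re^{{\rm i}\alpha}+se^{{\rm i}\beta}|^2=r^2+s^2+2rs\cos(\alpha-\beta)$, the difference
\begin{equation*}
|c_1 e^{{\rm i}\theta}+c_2|^2-r_\varrho^2|e^{{\rm i}(\theta+\theta_1)}+e^{{\rm i}\theta_2}|^2
= (r_1^2+r_2^2-2r_\varrho^2)+2(r_1 r_2-r_\varrho^2)\cos(\theta+\theta_1-\theta_2).
\end{equation*}
Since $r_1,r_2\ge r_\varrho$ both bracketed quantities are non-negative, and replacing the cosine by its worst value $-1$ yields the lower bound $(r_1^2+r_2^2-2r_\varrho^2)-2(r_1 r_2-r_\varrho^2)=(r_1-r_2)^2\ge 0$. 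Hence the pointwise inequality holds unconditionally, and the same argument (with a minus sign that does not affect the modulus) works for the $c_3,c_4$ pair.

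The only genuinely non-trivial point is the pointwise comparison itself, because monotonicity of $(r_1,r_2)\mapsto|r_1 e^{{\rm i}\alpha}+r_2 e^{{\rm i}\beta}|$ in the magnitudes can fail; it is the extra slack coming from $r_1^2+r_2^2\ge 2r_1 r_2$ (i.e.\ $(r_1-r_2)^2\ge 0$) that rescues the inequality, and spotting this is the main obstacle. Once it is in hand, everything else is assembly: combine the two pointwise estimates, maximise over $\theta$, invoke Lemma~\ref{T_theta} to identify the maximum as $2\sqrt 2\,\sqrt{1+|\sin\phi_\varrho/2|}$, and chain with $\Delta_\varrho\ge A_\varrho$ from the Proposition.
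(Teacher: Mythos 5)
Your proof is correct and follows essentially the same route as the paper: both reduce the statement to the pointwise bound $|c_1e^{{\rm i}\theta}+c_2|+|c_3e^{{\rm i}\theta}-c_4|\ge r_\varrho T(\theta)$, maximise over $\theta$, identify the maximum via Lemma~\ref{T_theta}, and chain with $\Delta_\varrho\ge A_\varrho$ from the Proposition. The only cosmetic difference is in the two-term estimate: the paper factors out the smaller modulus and uses $|\lambda z+1|\ge |z+1|$ for $\lambda\ge 1$ and $|z|=1$, whereas you expand squares and invoke $(r_1-r_2)^2\ge 0$; both are sound.
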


\begin{proof}
Let $c_i=r_i e^{{\rm i} \theta_i}$.
When $r_2 \le r_1$, we have
$$
|c_1 e^{{\rm i}\theta} + c_2|
=r_2 \left|{r_1 \over r_2} e^{{\rm i}(\theta+\theta_1-\theta_2)} + 1\right|
\ge r_2 |e^{{\rm i}(\theta+\theta_1-\theta_2)} + 1|
=r_2 |e^{{\rm i}\theta_1} e^{{\rm i}\theta}+ e^{{\rm i}\theta_2}|.
$$
In case of $r_1 \le r_2$, applying the above yields
$$
|c_1 e^{{\rm i}\theta} + c_2|
=|c_1 + c_2 e^{-{\rm i}\theta}|
\ge r_1 |e^{{\rm i}\theta_1} + e^{{\rm i}\theta_2} e^{-{\rm i}\theta}|
=r_1 |e^{{\rm i}\theta_1} e^{{\rm i}\theta}+ e^{{\rm i}\theta_2}|.
$$
It follows that
$|c_1 e^{{\rm i}\theta} + c_2|
\ge \min\{r_1,r_2\}|e^{{\rm i}\theta_1} e^{{\rm i}\theta}+ e^{{\rm i}\theta_2}|$,
which yields
$$
|c_1 e^{{\rm i}\theta} +c_2| + |c_3e^{{\rm i}\theta}-c_4|
\ge r_\varrho T(\theta).
$$
Therefore, we have
$A_\varrho \ge r_\varrho T(\theta)/2\sqrt 2$ for every $\theta$. By Lemma \ref{T_theta}, we have
\begin{equation}\label{yyyyy}
 A_\varrho\ge r_\varrho\, \sqrt{1+|\sin \phi_\varrho/2|}.
\end{equation}
The result follows from the criterion (\ref{theta-delta}).
\end{proof}

\begin{corollary}\label{phase}
Suppose that the diagonal and anti-diagonal entries of a three-qubit state $\varrho$
satisfy the relation $\Delta_\varrho=R_\varrho=r_\varrho$.
If $\varrho$ is separable, then it obeys the phase identity.
\end{corollary}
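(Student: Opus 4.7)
The plan is to derive the corollary as an almost immediate consequence of the Theorem just proved, namely the inequality $\Delta_\varrho \ge r_\varrho\sqrt{1+|\sin\phi_\varrho/2|}$. The key observation is that the phase identity $\theta_1+\theta_4=\theta_2+\theta_3 \mod 2\pi$ is nothing but the condition $\phi_\varrho = 0 \mod 2\pi$, which in turn is equivalent to $|\sin\phi_\varrho/2|=0$ when $\phi_\varrho$ is chosen in $[0,2\pi)$.

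First I would plug the hypothesis $\Delta_\varrho = r_\varrho$ into the criterion (\ref{Phi}) to obtain
\[
r_\varrho \ge r_\varrho\sqrt{1+|\sin\phi_\varrho/2|}.
\]
Assuming for the moment that $r_\varrho>0$, I would divide both sides by $r_\varrho$ and square, yielding $|\sin\phi_\varrho/2|\le 0$, hence $\sin\phi_\varrho/2=0$. Since $\phi_\varrho$ is taken modulo $2\pi$, the half-angle $\phi_\varrho/2$ lies in $[0,\pi)$, so the only way $\sin(\phi_\varrho/2)$ can vanish is $\phi_\varrho/2=0$, i.e.\ $\phi_\varrho=0\mod 2\pi$. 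This is exactly the phase identity.

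It remains to dispose of the degenerate situation $r_\varrho=0$. The hypothesis $R_\varrho=r_\varrho$ forces all $|c_i|$ to coincide, so $r_\varrho=0$ implies $c_1=c_2=c_3=c_4=0$. In that case the phases $\theta_i$ are not intrinsically defined and the identity $\theta_1+\theta_4=\theta_2+\theta_3 \mod 2\pi$ can be satisfied by any convention; it holds vacuously.

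I do not anticipate a serious obstacle: the content is entirely packed into the Theorem, and the corollary is the trivial equality case of (\ref{Phi}). The only minor care is to track the definition of $\phi_\varrho$ modulo $2\pi$ when passing from $\sin(\phi_\varrho/2)=0$ back to the phase identity, and to note that the equal-magnitudes, zero-magnitudes branch must be handled separately, as indicated above.
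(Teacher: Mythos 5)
Your argument is exactly the intended one: the paper states Corollary \ref{phase} without a separate proof precisely because it is the equality case $\Delta_\varrho=r_\varrho$ of the inequality (\ref{Phi}), forcing $|\sin\phi_\varrho/2|=0$ and hence $\phi_\varrho=0\bmod 2\pi$. Your handling of the degenerate case $r_\varrho=0$ is a reasonable extra precaution, and the whole proposal is correct and matches the paper's approach.
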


If the anti-diagonal entries of separable $\varrho$ share a common magnitude and it {\sf X}-part
is singular, then $\Delta_\varrho=R_\varrho=r_\varrho$, and so we have the following:

\begin{corollary}\label{angle-cor}
Suppose that the anti-diagonal entries of a three-qubit state $\varrho$ share a common magnitude
and its {\sf X}-part is singular. If $\varrho$ is separable, then it obeys the phase identity.
\end{corollary}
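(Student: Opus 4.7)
The plan is to reduce Corollary \ref{angle-cor} to the preceding Corollary \ref{phase}. The latter requires the relation $\Delta_\varrho=R_\varrho=r_\varrho$, so the task is to derive this relation from the two hypotheses of Corollary \ref{angle-cor}: common magnitude of the anti-diagonal entries and singularity of the {\sf X}-part.

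Writing $R=|c_1|=|c_2|=|c_3|=|c_4|$, the definitions of $R_\varrho$ and $r_\varrho$ immediately yield $R_\varrho=r_\varrho=R$. We may assume $R>0$, since otherwise the phase identity is vacuous. The matrix $X(a,b,c)$ displayed in (1) is, after pairing each basis index $i$ with $9-i$ and reordering, the orthogonal direct sum of the four $2\times 2$ blocks
\[
B_i=\begin{pmatrix} a_i & c_i \\ \bar c_i & b_i \end{pmatrix}, \qquad i=1,2,3,4.
\]
Since the {\sf X}-part of a three-qubit separable state is itself separable \cite{han_kye_GHZ}, it is positive semidefinite, and hence so is each $B_i$; in particular $a_ib_i\ge |c_i|^2=R^2$ for every $i$. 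Moreover $X(a,b,c)$ is singular if and only if at least one $B_i$ is singular, which forces $a_ib_i=R^2$ and thus $\sqrt{a_ib_i}=R$ for some index $i$. This gives $\Delta_\varrho\le R$, and combining with the separability inequality (\ref{ppt-diag}) $\Delta_\varrho\ge R_\varrho=R$ yields $\Delta_\varrho=R_\varrho=r_\varrho$. Corollary \ref{phase} then delivers the phase identity.

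Since the argument is a direct reduction, no serious technical obstacle arises; the only point worth verifying is the block decomposition of the {\sf X}-part, which is read off immediately from the matrix form (1).
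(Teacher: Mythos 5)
Your proof is correct and follows essentially the same route as the paper: the paper's entire argument for this corollary is the one-line observation that common magnitude plus singularity of the {\sf X}-part force $\Delta_\varrho=R_\varrho=r_\varrho$ for a separable state, after which Corollary \ref{phase} applies. You have merely made explicit the block decomposition into the $2\times 2$ matrices $B_i$ and the use of (\ref{ppt-diag}) that justify this observation, which is a faithful filling-in of the paper's intended reasoning.
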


Corollary \ref{phase} will be useful to characterize separability of rank four {\sf X}-states in the next section.
In order to compare two criteria (\ref{ppt-diag}) and (\ref{Phi}),
we fix the diagonal parts $a,b$ and the {\sl phase part}
$(e^{{\rm i}\theta_1},e^{{\rm i}\theta_2},e^{{\rm i}\theta_3},e^{{\rm i}\theta_4})$,
and consider the four dimensional convex body
$\mathbb S$ consisting of $(r_1,r_2,r_3,r_4)\in\mathbb R^4$ so that the
{\sf X}-state $X(a,b,c)$ is separable. The criterion (\ref{ppt-diag})
tells us that $\mathbb S$ is sitting in the cube with the width
$\Delta_\varrho$. On the other hand, we see that the region $\mathbb
S$ is located in the union of strips with the width
$\Delta_\varrho/\sqrt{1+|\sin \phi_\varrho/2|}$ by (\ref{Phi}). See
FIGURE 2.

\begin{figure}
\setlength{\unitlength}{.8 mm}
\begin{center}
\begin{picture}(50,50)
\drawline(0,0)(0,50) \drawline(0,0)(50,0) \drawline(25,0)(25,25)
\drawline(0,25)(25,25) \drawline(20,20)(50,20)
\drawline(20,20)(20,50) \put(20,20){\circle*{2}}
\put(21,21){\circle*{0.5}} \put(21,22){\circle*{0.5}}
\put(21,23){\circle*{0.5}} \put(21,24){\circle*{0.5}}
\put(22,21){\circle*{0.5}} \put(22,22){\circle*{0.5}}
\put(22,23){\circle*{0.5}} \put(22,24){\circle*{0.5}}
\put(23,21){\circle*{0.5}} \put(23,22){\circle*{0.5}}
\put(23,23){\circle*{0.5}} \put(23,24){\circle*{0.5}}
\put(24,21){\circle*{0.5}} \put(24,22){\circle*{0.5}}
\put(24,23){\circle*{0.5}} \put(24,24){\circle*{0.5}}
\end{picture}
\end{center}
\caption{The square in the picture represents the four dimensional
cube determined by the criterion (\ref{ppt-diag}), and the union of
strips is the region determined by (\ref{Phi}). So, the region
$\mathbb S$ located in the intersection of these regions, and the
dotted part is PPT entanglement detected by the criterion
(\ref{Phi}). It will be shown in Section 4 that the \lq corner point\rq\ of the strip, shown by a big dot,
is on the boundary of the separability region $\mathbb S$.}
\end{figure}
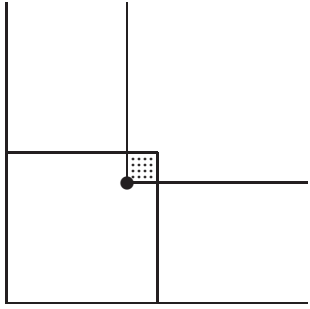

Now, we take an {\sf X}-state $\varrho=X(a,b,c)$ satisfying the
strict inequality in (\ref{ppt-diag}) but violating (\ref{Phi}).
Then we have $|c_i|< \sqrt{a_jb_j}$ for every $i,j=1,2,3,4$.
This means that  all the partial
transposes of $\varrho$ have full ranks, and so we see that $\varrho$ is an interior
point of the set of all three-qubit PPT states. Therefore, we conclude that PPT
entanglement detected by (\ref{Phi}) has nonzero volume.

\section{Separable {\sf X}-states with rank four}

In this section, we characterize the separability of three-qubit {\sf X}-states with rank four
in terms of their diagonal and anti-diagonal entries.
It was shown in \cite{han_kye_GHZ} that the {\sf X}-part $\varrho_X$
of a pure product state $\varrho=|\xi\rangle\langle \xi|$ with
$|\xi\rangle=|x\rangle\otimes |y\rangle\otimes |z\rangle$ is given by the average of four pure product states:
\begin{equation}\label{average}
\omega_X=\frac 14 \sum_{k=0}^3 |\xi(k)\ran\lan\xi(k)|,
\end{equation}
where $|\xi(k)\ran$ is given by
\begin{equation}\label{decom-rank1}
\begin{aligned}
|\xi(0)\ran=&|x_+ \ran \ot |y_+ \ran \ot |z_+ \ran,\\
|\xi(1)\ran=&|x_+ \ran \ot |y_- \ran \ot |z_- \ran,\\
|\xi(2)\ran=&|x_- \ran \ot |y_+ \ran \ot |z_- \ran,\\
|\xi(3)\ran=&|x_- \ran \ot |y_- \ran \ot |z_+ \ran,
\end{aligned}
\end{equation}
with the notations $|x_\pm\ran =(x_0,\pm x_1)^\ttt$ and $|y_\pm\ran$,
$|z_\pm\ran$ similarly. This simple observation was used to see
in \cite{han_kye_GHZ} that the {\sf X}-part of a separable state
is again separable, and the {\sf X}-part of an entanglement witness is again an entanglement witness.
Actually, the {\sf X}-parts of the pure product states of these four vectors coincide.
If the {\sf X}-part of $\varrho$ is given by $X(a,b,c)$, then we have
$$
\sqrt{a_i b_i} = |x_0 y_0 z_0 x_1 y_1 z_1 | = |c_j|, \qquad i,j=1,2,3,4.
$$
Therefore, we see that the {\sf X}-part of $\varrho=|\xi\rangle\langle \xi|$ is of rank $\le 4$,
and is of rank four if it is not diagonal.

\begin{proposition}\label{extreme}
Every three-qubit separable {\sf X}-state is a convex combination of
non-diagonal separable {\sf X}-states of rank four and diagonal states.
\end{proposition}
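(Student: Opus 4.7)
The plan is to use the decomposition of the \textsf{X}-part of a pure product state given in (\ref{average})--(\ref{decom-rank1}) together with the linearity of the \lq\lq take the \textsf{X}-part'' operation.

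First I would write a given separable \textsf{X}-state $\varrho$ as a convex combination
$$
\varrho=\sum_k p_k\,|\xi_k\ran\lan\xi_k|,\qquad |\xi_k\ran=|x^{(k)}\ran\ot|y^{(k)}\ran\ot|z^{(k)}\ran,
$$
of pure product states. Let $\Phi$ denote the map sending an $8\times 8$ Hermitian matrix to its \textsf{X}-part (that is, keeping only the diagonal and anti-diagonal entries and zeroing out the rest). Since $\Phi$ is linear and $\varrho$ is already an \textsf{X}-state, applying $\Phi$ gives
$$
\varrho=\Phi(\varrho)=\sum_k p_k\,\Phi\bigl(|\xi_k\ran\lan\xi_k|\bigr).
$$

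Next I would analyze each summand $\Phi(|\xi_k\ran\lan\xi_k|)$ using the discussion preceding the proposition. Two cases arise according to whether the anti-diagonal entries of $\Phi(|\xi_k\ran\lan\xi_k|)$ vanish. Since these anti-diagonal entries have common magnitude $|x_0^{(k)}y_0^{(k)}z_0^{(k)}x_1^{(k)}y_1^{(k)}z_1^{(k)}|$, they are either all zero or all nonzero. If all zero, then $\Phi(|\xi_k\ran\lan\xi_k|)$ is a diagonal positive matrix, hence a (nonnegative multiple of a) diagonal state. If nonzero, then by (\ref{average}) we have
$$
\Phi(|\xi_k\ran\lan\xi_k|)=\tfrac14\sum_{j=0}^3 |\xi_k(j)\ran\lan\xi_k(j)|,
$$
a convex combination of four pure product states, so it is separable; moreover, since $\sqrt{a_ib_i}=|c_j|\ne 0$ for all $i,j$, each of its diagonal blocks $\left(\begin{smallmatrix}a_i & c_?\\ \bar c_? & b_i\end{smallmatrix}\right)$ is a rank one $2\times 2$ matrix, and summing over the four such $2\times 2$ blocks gives rank exactly four. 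Thus in this case $\Phi(|\xi_k\ran\lan\xi_k|)$ is a non-diagonal separable \textsf{X}-state of rank four.

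Finally, I would combine these observations to write $\varrho$ as a convex combination of non-diagonal separable \textsf{X}-states of rank four and (positive scalar multiples of) diagonal states, after absorbing the positive scalars into the convex coefficients. The main subtlety is really just the rank count and the observation that the anti-diagonal entries of the \textsf{X}-part of a pure product state all share a common magnitude, both of which are immediate from the explicit formula $c_j=\prod x y z$ -- so I do not anticipate a serious obstacle here, as the result is essentially a direct consequence of (\ref{average}) and linearity.
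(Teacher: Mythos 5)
Your proof is correct and follows essentially the same route as the paper's: decompose $\varrho$ into pure product states, apply the linear \textsf{X}-part map, and split the summands according to whether the product vector has a zero entry (equivalently, whether the common anti-diagonal magnitude $|x_0y_0z_0x_1y_1z_1|$ vanishes). The rank-four count you supply is the same fact the paper records in the discussion immediately preceding the proposition.
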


\begin{proof}
Suppose that $\varrho$ is a separable {\sf X}-state, and write $\varrho=\sum_i\lambda_i|\xi_i\ran\lan \xi_i|$
with product vectors $|\xi_i\ran$.
Take the {\sf X}-part in both side. If $|\xi\ran$ has no zero entry then the {\sf X}-part of $|\xi\ran\lan\xi|$
is a non-diagonal separable {\sf X}-states of rank four. On the other hand,
if $|\xi\ran$ has a zero entry then the {\sf X}-part of $|\xi\ran\lan\xi|$
is a diagonal state.
\end{proof}

We proceed to show that the decomposition (\ref{average}) is unique.
It was shown in \cite{ha-kye-sep-face} that a separable state
$\varrho=\sum_i\lambda_i|\xi_i\ran\lan \xi_i|$ into pure product
states has a unique decomposition whenever the following two conditions
are satisfied:
\begin{enumerate}
\item[(A)]
the family $\{|\xi_i\ran\lan \xi_i|\}$ of pure product states is linearly independent
in the real vector space of all Hermitian matrices,
\item[(B)]
a product vector which belongs to the span of $\{|\xi_i\ran\}$ is parallel to one of $|\xi_i\ran$.
\end{enumerate}
The proof is same for multipartite cases.

\begin{proposition}\label{xpart-1}
The {\sf X}-part of a pure product state is a separable state
with a unique decomposition into pure product states.
\end{proposition}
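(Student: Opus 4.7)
The plan is to apply the uniqueness criterion from \cite{ha-kye-sep-face} recalled just above: it suffices to verify the two conditions (A) and (B) for the four product vectors $|\xi(0)\rangle,\ldots,|\xi(3)\rangle$ appearing in (\ref{average}). The case when one of $x_0,x_1,y_0,y_1,z_0,z_1$ vanishes is degenerate (the {\sf X}-part of $\varrho$ is then diagonal) and can be handled by a direct spectral argument on the reduced factor; I concentrate on the generic case and assume from now on that all six of $x_0,\ldots,z_1$ are nonzero.

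For (A) I would plug a hypothetical identity $\sum_{k=0}^{3}\lambda_k|\xi(k)\rangle\langle\xi(k)|=0$ into the four matrix entries at positions $(1,1),(1,2),(1,3),(1,5)$, which lie outside the {\sf X}-shape and so cannot be cancelled by anti-diagonal contributions. By (\ref{decom-rank1}) each such entry factors as a nonzero common constant (depending only on $x,y,z$) times a sign $\pm 1$ varying with $k$, and a direct computation exhibits the resulting sign vectors as the rows of the Hadamard matrix
\begin{equation*}
H=\begin{pmatrix}
1&1&1&1\\
1&-1&-1&1\\
1&-1&1&-1\\
1&1&-1&-1
\end{pmatrix}.
\end{equation*}
Since $H$ is invertible, we conclude $\lambda_0=\lambda_1=\lambda_2=\lambda_3=0$.

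For (B) I would parametrize a general element of the span as $|\eta\rangle=\sum_{k=0}^{3}\mu_k|\xi(k)\rangle$ and compute its eight coordinates in the lexicographic basis as
\begin{equation*}
(A\,x_0y_0z_0,\ B\,x_0y_0z_1,\ C\,x_0y_1z_0,\ D\,x_0y_1z_1,\ D\,x_1y_0z_0,\ C\,x_1y_0z_1,\ B\,x_1y_1z_0,\ A\,x_1y_1z_1),
\end{equation*}
where $(A,B,C,D)^T=H(\mu_0,\mu_1,\mu_2,\mu_3)^T$. Using the nonvanishing of the $x_i,y_j,z_k$, the Segre relations $\eta_{\mathbf{i}}\eta_{\mathbf{j}}=\eta_{\mathbf{k}}\eta_{\mathbf{l}}$ (for every two pairs of triples that carry the same multiset of bits in each tensor slot) collapse to $A^{2}=B^{2}=C^{2}=D^{2}$ together with $AB=CD$, $AC=BD$, $AD=BC$. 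Writing $A=s_AK,\ldots,D=s_DK$ with $s_A,\ldots,s_D\in\{\pm 1\}$, the three cross-relations each reduce to the single parity condition $s_As_Bs_Cs_D=+1$, leaving only the eight patterns $\pm(1,1,1,1),\,\pm(1,1,-1,-1),\,\pm(1,-1,1,-1),\,\pm(1,-1,-1,1)$. Since $H^{-1}=\tfrac14 H$, inverting sends each such $(A,B,C,D)$ to a $(\mu_0,\mu_1,\mu_2,\mu_3)$ supported on a single coordinate, so $|\eta\rangle$ is a scalar multiple of one of the $|\xi(k)\rangle$, giving (B).

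The main obstacle is purely combinatorial bookkeeping: one must carry the eight sign factors coming from $|x_\pm\rangle,|y_\pm\rangle,|z_\pm\rangle$ correctly through the entries of $|\xi(k)\rangle$ and of $|\xi(k)\rangle\langle\xi(k)|$, and check that the full system of Segre relations really does collapse to the single parity condition displayed above, leaving exactly four product directions up to overall scale.
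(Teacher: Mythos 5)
Your argument for the non-diagonal case is correct, but it takes a genuinely different route from the paper's. Both proofs reduce the statement to verifying conditions (A) and (B). For (A), the paper pairs the hypothetical identity $\sum_k a_k|\xi(k)\rangle\langle\xi(k)|=0$ against the biorthogonal product vectors $|x_\pm^\perp\rangle\otimes|y_\pm^\perp\rangle\otimes|z_\pm^\perp\rangle$, each of which annihilates exactly three of the four $|\xi(k)\rangle$, so that the coefficients are killed one at a time; you instead extract the four matrix entries $(1,1),(1,2),(1,3),(1,5)$ and invert the Hadamard matrix $H$ --- the sign vectors you claim are indeed $(1,1,1,1)$, $(1,-1,-1,1)$, $(1,-1,1,-1)$, $(1,1,-1,-1)$, i.e.\ the rows of $H$, so this works. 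For (B), the paper never touches the span $E$ directly: it classifies the product vectors lying in $E^\perp$ (exactly the four $|\eta(k)\rangle$ built from $|x^\perp\rangle,|y^\perp\rangle,|z^\perp\rangle$) and then dualizes via $E=(E^\perp)^\perp$; you parametrize $E$ explicitly and solve the Segre relations, finding that $(A,B,C,D)$ must be a common scalar times an even-parity sign pattern and that $H^{-1}=\tfrac14H$ maps each such pattern to a single coordinate vector. I checked the coordinate expression of $\sum_k\mu_k|\xi(k)\rangle$ and the collapse of the quadratic relations to $A^2=B^2=C^2=D^2$ together with the single parity condition $s_As_Bs_Cs_D=+1$; both are right, and since the relations you impose include all $2\times2$ minors of the three flattenings, they do characterize product vectors. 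Your approach is more computational but entirely self-contained and makes the four product directions in $E$ completely explicit; the paper's is shorter and sidesteps the sign bookkeeping at the cost of a detour through the orthogonal complement. Two minor remarks. First, your aside that the chosen entries ``lie outside the {\sf X}-shape'' is both unnecessary and slightly off (the $(1,1)$ entry is diagonal, hence {\sf X}-shaped); all that matters is that the whole matrix is assumed to vanish. Second, your one-line dismissal of the degenerate case is no more rigorous than the paper's own --- and in fact, when the {\sf X}-part is diagonal of rank two (e.g.\ $|\xi\rangle=|0\rangle\otimes|0\rangle\otimes|z\rangle$ with $z_0z_1\neq0$), the decomposition into pure product states is \emph{not} unique, so the uniqueness assertion should be understood as applying only to the non-diagonal case, which is the only case used in the sequel.
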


\begin{proof}
It remains to show the uniqueness of decomposition. It also suffices to prove this
when the {\sf X}-part of $\varrho=|\xi\ran\lan\xi|$ is non-diagonal, or equivalently all the entries of
$|\xi\ran=|x\ran\otimes |y\ran\otimes |z\ran$ are nonzero.
We denote by $|x^\perp\ran, |y^\perp\ran$ and $|z^\perp\ran$ the orthogonal vectors
to $|x\ran, |y\ran$ and $|z\ran$, respectively.
If $\sum_{k=0}^3 a_k|\xi(k)\ran\lan\xi(k)|=0$, then we have
$$
0=\sum_{k=0}^3 a_k |\xi(k)\ran\lan\xi(k)|x_-^\perp\ran\ot |y_-^\perp\ran\ot |z_-^\perp\ran=
\left(a_0\lan x_+ |x_-^\perp\ran  \lan y_+|y_-^\perp\ran \lan z_+|z_-^\perp\ran\right)|\xi(0)\ran,
$$
and so we have $a_0=0$.
We also have $a_1=a_2=a_3=0$ in the same say, by applying
$$
|x_-^\perp\ran\ot |y_+^\perp\ran \ot |z_+^\perp\ran,\qquad
|x_+^\perp\ran\ot |y_-^\perp\ran \ot |z_+^\perp\ran,\qquad
|x_+^\perp\ran\ot |y_+^\perp\ran \ot |z_-^\perp\ran,
$$
respectively.

We consider the span $E$ of $\{|\xi(k)\rangle:k=0,1,2,3\}$, and
suppose that a product vector $|\zeta\rangle=|\zeta_1\rangle\otimes
|\zeta_2\rangle\otimes |\zeta_3\rangle$ is in the orthogonal
complement $E^\perp$. We first note that $|\zeta_1\rangle$ must be
orthogonal to $|x_+\rangle$ or $|x_-\rangle$, that is,
$|\zeta_1\rangle \para |x_+^\perp\rangle$ or
$|\zeta_1\rangle \para |x_-^\perp\rangle$ in notation for
parallel. If $|\zeta_1\rangle \para |x_+^\perp\rangle$, then we
have two possibilities:
\begin{itemize}
\item
both $|\zeta_2\rangle\para |y_+^\perp\rangle$ and $|\zeta_3\rangle\para |z_+^\perp\rangle$
hold,
\item
both $|\zeta_2\rangle\para |y_-^\perp\rangle$ and
$|\eta_3\rangle\para |z_-^\perp\rangle$ hold.
\end{itemize}
Therefore,
$|\zeta\rangle$ must be parallel to $|x_+^\perp\rangle\otimes
|y_+^\perp\rangle \otimes |z_+^\perp\rangle$ or
$|x_+^\perp\rangle\otimes |y_-^\perp\rangle \otimes
|z_-^\perp\rangle$. Now, we write $|\eta\rangle
=|x^\perp\rangle \otimes |y^\perp\rangle \otimes |z^\perp\rangle$. We have
shown that if $|\zeta_1\rangle\para |x_+^\perp\rangle$ then
$|\zeta\rangle\para |\eta(0)\rangle$ or $|\zeta\rangle\para
|\eta(1)\rangle$. In the same way, if $|\zeta_1\rangle\para
|x_-^\perp\rangle$ then we have $|\zeta\rangle\para
|\eta(2)\rangle$ or $|\zeta\rangle\para |\eta(3)\rangle$.
Therefore, we see that $E^\perp$ has only four product vectors
$|\eta(k)\rangle$ with $k=0,1,2,3$. Again, the orthogonal complement
$E=(E^\perp)^\perp$ of $E^\perp$ has only four product vectors $|\xi(k)\rangle$
with $k=0,1,2,3$. This completes the proof.
\end{proof}

Uniqueness of decomposition into pure product states has been considered by several authors.
For examples, see \cite{{alfsen},{ha-kye-sep-face},{kirk}}
for bi-partite cases, and \cite{ha-kye-multi-unique} for multi-qubit cases.
Especially, it was shown in Theorem 4.1 of \cite{ha-kye-multi-unique}
that generic choice of four product vectors in the three-qubit system
gives rise to a separable state with unique decomposition. We could not use this result, because four vectors in
the range of $\varrho$ are not in general position in the above discussion.
Note that separable states with rank four have been studied extensively in \cite{chen_dj_rank4}.

\begin{theorem}\label{rank-four}
Suppose that $\varrho=X(a,b,c)$ is a non-diagonal {\sf X}-state of rank four. Then the following are equivalent:
\begin{enumerate}
\item[(i)]
$\varrho$ is separable;
\item[(ii)]
$\varrho$ satisfies the relations
\begin{equation}\label{rank-4}
a_1a_4=a_2a_3,\qquad \sqrt{a_ib_i}=|c_j|\ (i,j=1,2,3,4),\qquad c_1 c_4  = c_2c_3.
\end{equation}
\item[(iii)]
there exists a product vector $|\xi\ran = |x\ran\ot |y\ran \ot |z\ran$
with nonzero entries such that $\varrho$ is the {\sf X}-part of $|\xi\ran\lan \xi|$.
\end{enumerate}
If $|\eta \rangle$ is another product vector satisfying {\rm (iii)}, then we have $|\eta\ran=|\xi(k)\ran$
up to scalar multiplication for
some $k=0,1,2,3$.
\end{theorem}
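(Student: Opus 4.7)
The plan is to prove the cycle $(iii)\Rightarrow(i)\Rightarrow(ii)\Rightarrow(iii)$ and to treat the uniqueness clause separately. The implication $(iii)\Rightarrow(i)$ is immediate from the averaging identity (\ref{average}). For $(iii)\Rightarrow(ii)$, write $|\xi\rangle=|x\rangle\otimes|y\rangle\otimes|z\rangle$ and compute the entries of $|\xi\rangle\langle\xi|$ directly: each diagonal entry factors as $|x_{i_1}|^2|y_{i_2}|^2|z_{i_3}|^2$, which makes $a_1a_4=a_2a_3$ automatic; every $\sqrt{a_ib_i}$ equals the common value $|x_0x_1y_0y_1z_0z_1|$, which is also $|c_j|$ for every $j$; and both $c_1c_4$ and $c_2c_3$ equal $x_0^2\bar x_1^2|y_0y_1|^2|z_0z_1|^2$.

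The main work lies in $(i)\Rightarrow(ii)$. Separability gives positivity of $\varrho$ together with its three partial transposes $\varrho^{\Gamma_A},\varrho^{\Gamma_B},\varrho^{\Gamma_C}$, and the formulas for the action of the partial transposes on $c$ recorded just before Proposition~2.4 show that they together permute $c_1,c_2,c_3,c_4$ transitively in magnitude. Positivity of the four anti-diagonal $2\times 2$ sub-blocks in each of these four matrices therefore yields $a_ib_i\ge|c_j|^2$ for every pair $i,j$. Since $\varrho$ is non-diagonal some $c_j$ is nonzero, forcing $a_i,b_i>0$ for every $i$, so each of the four anti-diagonal blocks of $\varrho$ has rank at least one. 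As the rank of $\varrho$ equals the sum of these block ranks and equals four, every block has rank exactly one, so $a_ib_i=|c_i|^2$. Combined with $a_ib_i\ge|c_j|^2$ for all $i,j$, this forces every $|c_i|$ to equal a common value $K$, and $\sqrt{a_ib_i}=K$ for all $i$. Substituting $b_j=K^2/a_j$ into the remaining two terms $\sqrt[4]{a_1b_2b_3a_4}$ and $\sqrt[4]{b_1a_2a_3b_4}$ of $\Delta_\varrho$ and applying the separability criterion (\ref{ppt-diag}) then pins down $a_1a_4=a_2a_3$. At this point $\Delta_\varrho=R_\varrho=r_\varrho$, so Corollary~\ref{phase} delivers the phase identity $\theta_1+\theta_4=\theta_2+\theta_3\mod 2\pi$, which combined with the equal magnitudes $|c_i|=K$ gives $c_1c_4=c_2c_3$.

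For $(ii)\Rightarrow(iii)$, solve explicitly for the six components of $|x\rangle,|y\rangle,|z\rangle$: the relations $a_1a_4=a_2a_3$ and the common value of $\sqrt{a_ib_i}$ make the overdetermined system of magnitude equations consistent (with the usual scaling freedom inherent in tensor decompositions), and after normalizing $x_0$ real and positive the phases of $x_1,y_0,y_1,z_0,z_1$ can be read off from three of the four equations $\arg c_j=\theta_j$, the fourth being forced by $c_1c_4=c_2c_3$. For the uniqueness clause, invoke Proposition~\ref{xpart-1}: any $|\eta\rangle$ satisfying (iii) produces via (\ref{average}) another decomposition $\varrho=\frac14\sum_k|\eta(k)\rangle\langle\eta(k)|$ of $\varrho$ into pure product states, and the asserted uniqueness of such decompositions forces $\{|\eta(k)\rangle\}_k$ to coincide as a set of rays with $\{|\xi(k)\rangle\}_k$; in particular $|\eta\rangle=|\eta(0)\rangle$ is a scalar multiple of some $|\xi(k)\rangle$.

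The main obstacle I anticipate is the rank-counting step in $(i)\Rightarrow(ii)$: turning the two hypotheses ``rank four'' and ``non-diagonal'' into the conclusion that every anti-diagonal block has rank exactly one relies on combining the PPT constraints for all three bipartite splits to exclude rank-zero blocks, which would otherwise force $\varrho$ to be diagonal. Once that reduction is in hand, the equality $\Delta_\varrho=R_\varrho=r_\varrho$ drops out quickly and makes Corollary~\ref{phase} applicable to extract the phase identity.
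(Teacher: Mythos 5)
Your proof is correct, and the three implications of the equivalence follow essentially the same route as the paper: the PPT inequalities $a_ib_i\ge|c_j|^2$ together with the rank-four hypothesis force $\sqrt{a_ib_i}=|c_j|$ for all $i,j$, the remaining two terms of (\ref{ppt-diag}) then pin down $a_1a_4=a_2a_3$, and Corollary \ref{phase} supplies the phase identity. Your justification of the rank-counting step (every anti-diagonal $2\times 2$ block is nonzero because non-diagonality plus PPT forces $a_i,b_i>0$, hence each block has rank exactly one) is a useful expansion of what the paper states in a single sentence. Where you genuinely diverge is the uniqueness clause. The paper proves it by direct computation: it normalizes $|\eta\ran$, matches diagonals to recover the moduli via the rank-one matrices built from (\ref{rank-4}), and solves the linear system for the angles, obtaining (\ref{anglesolution}) up to the $\pm\pi$ ambiguities that generate the four vectors $|\xi(k)\ran$. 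You instead note that $|\eta\ran$ produces, via (\ref{average}), a second decomposition $\varrho=\frac 14\sum_k|\eta(k)\ran\lan\eta(k)|$ and invoke the uniqueness of decomposition from Proposition \ref{xpart-1}; since $\||\xi\ran\|=\||\eta\ran\|$ (both squared norms equal the trace of $\varrho$), the two convex decompositions must coincide as sets of pure product states, whence $|\eta\ran=|\eta(0)\ran$ is proportional to some $|\xi(k)\ran$. This is shorter and not circular, as Proposition \ref{xpart-1} is established before the theorem; what it does not deliver is the explicit parametrization of the decomposing vectors, which the paper's computation produces and which is convenient later. The only place where you are thinner than the paper is (ii) $\Rightarrow$ (iii), which you leave as \lq solve the overdetermined system\rq; your consistency claims (the magnitude equations are compatible because of $a_1a_4=a_2a_3$ and the common value of $\sqrt{a_ib_i}$, and the fourth phase equation is implied by $c_1c_4=c_2c_3$) are correct, so this is a matter of explicitness rather than a gap.
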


\begin{proof}
Suppose that $\varrho$ is separable, and so $\varrho$ satisfies the inequality (\ref{ppt-diag}).
We first note that $\varrho$ is of rank four if and only if $\sqrt{a_ib_i}=|c_i|$ for each $i=1,2,3,4$.
This shows that the inequality $\sqrt{a_ib_i}\ge |c_j|$ given by the PPT condition actually becomes the identity
$\sqrt{a_ib_i}=|c_j|$ for $i,j=1,2,3,4$. Write this number by $R$. By (\ref{ppt-diag}), we have
$$
R^2\le \sqrt[4]{a_1b_2b_3a_4}\, \sqrt[4]{b_1a_2a_3b_4}=\prod_{i=1}^4\sqrt[4]{a_ib_i}=R^2,
$$
which implies the identity $\sqrt[4]{a_1b_2b_3a_4}=R=\sqrt[4]{b_1a_2a_3b_4}$. Therefore, we have
$$
a_1a_4={R^4 \over b_1 b_4}=a_2a_3.
$$
Especially, we see that the relation
$\Delta_\varrho=R_\varrho=r_\varrho$ holds, and so the last condition of (\ref{rank-4}) must
hold by Corollary \ref{phase}. This proves the implication (i) $\Longrightarrow$ (ii).

For the direction (ii) $\Longrightarrow$ (iii), we may assume that $|c_i|=1$
and write $c_i=e^{{\rm i}\theta_i}$ for each $i=1,2,3,4$. We put
\begin{equation}\label{anglesolution}
\alpha={-\theta_2-\theta_3 \over 2},\quad \beta={-\theta_2+\theta_4 \over 2},\quad \gamma={-\theta_3+\theta_4 \over 2},
\end{equation}
and
$$
|\xi \rangle = {1 \over a_1}(\sqrt{a_1}, \sqrt{b_4} e^{{\rm i} \alpha})^\ttt
\otimes (\sqrt{a_1},\sqrt{a_3} e^{{\rm i} \beta})^\ttt \otimes (\sqrt{a_1},\sqrt{a_2} e^{{\rm i} \gamma})^\ttt.
$$
It is directly checked that the {\sf X}-part of $|\xi\rangle \langle \xi|$ is $\varrho$.
The direction (iii) $\Longrightarrow$ (i) follows by Proposition \ref{xpart-1}.

Suppose that $|\eta \rangle$ is another product vector satisfying {\rm (iii)}. We may write
$$
|\eta\ran=(x_0,x_1 e^{{\rm i}\alpha})^\ttt \ot (y_0,y_1 e^{{\rm i}\beta})^\ttt\ot (z_0,z_1 e^{{\rm i}\gamma})^\ttt,
$$
with positive numbers $x_i,y_i$ and $z_i$.
Let $|x\ran=(x_0,x_1)^\ttt, |y\ran=(y_0,y_1)^\ttt$ and $|z\ran=(z_0,z_1)^\ttt$ and
$|\zeta \rangle = |x\ran\ot |y\ran \ot |z\ran$.
Since $|\zeta\rangle\langle \zeta|$ shares the diagonal with $|\eta\rangle\langle\eta|$, we have
\begin{equation}\label{sch-1}
|\zeta\ran=(\sqrt a_1, \sqrt a_2,\sqrt a_3,\sqrt a_4, \sqrt b_4,\sqrt b_3,\sqrt b_2\sqrt b_1)^\ttt.
\end{equation}
The relation (\ref{rank-4}) shows that the following two matrices
$$
\left( \begin{matrix} \sqrt a_1 &\sqrt a_2 \\ \sqrt a_3 &\sqrt a_4\end{matrix} \right),
\qquad
\left(\begin{matrix}
\sqrt a_1 &\sqrt a_2 &\sqrt a_3 &\sqrt a_4\\ \sqrt b_4 &\sqrt b_3 &\sqrt b_2 &\sqrt b_1\end{matrix}\right)
$$
are of column rank one. From their row rank, we have
$$
|x\ran=(\sqrt a_1,\sqrt b_4)^\ttt,\qquad
|y\ran = (\sqrt a_1,\sqrt a_3)^\ttt,\qquad
|z\ran=(\sqrt a_1,\sqrt a_2)^\ttt,
$$
up to scalar multiplication. From the anti-diagonals of $|\eta \rangle \langle \eta |$,
we get the equations
$$
\alpha+\beta-\gamma=-\theta_2,\quad \alpha-\beta+\gamma=-\theta_3,\qquad -\alpha+\beta+\gamma=\theta_4.
$$
Their solutions are given by (\ref{anglesolution}). If we replace one of $\theta_i$ as $\theta_i+2\pi$,
then two of (\ref{anglesolution}) are changed up to $\pm\pi$. Hence, $|\eta\ran$ must be
one of $|\xi(k)\ran$, up to scalar multiplication.
\end{proof}

We note that the relation (\ref{rank-4}) in fact tells us that the right side of vector (\ref{sch-1})
has the Schmidt rank $(1,1,1)$ in the sense of \cite{han_kye_tri}.
The following sufficient condition for separability will be useful in the next section.

\begin{corollary}\label{suff-4}
A three-qubit {\sf X}-state $\varrho=X(a,b,c)$ is separable whenever
$$
\Delta_\varrho\ge R_\varrho=r_\varrho,\qquad c_1c_4=c_2c_3.
$$
\end{corollary}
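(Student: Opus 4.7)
The plan is to write $\varrho = \sigma + D$ where $\sigma$ is a non-diagonal rank four {\sf X}-state satisfying the identities (\ref{rank-4}), so that separability of $\sigma$ follows from Theorem \ref{rank-four}, while $D$ is a diagonal matrix with non-negative diagonal entries, hence a (separable) convex combination of the computational-basis states $|ijk\rangle\langle ijk|$. Set $R := R_\varrho = r_\varrho$. If $R=0$ then $\varrho$ is itself diagonal and there is nothing to prove, so assume $R>0$; together with $\Delta_\varrho \ge R$ this forces every $a_i, b_i > 0$. Since every $|c_i|$ equals $R$, the hypothesis $c_1c_4=c_2c_3$ reduces to the phase identity $\theta_1+\theta_4=\theta_2+\theta_3 \mod 2\pi$.

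Next I look for positive reals $a'_i \le a_i$ and $b'_i \le b_i$ with $a'_i b'_i = R^2$ and $a'_1 a'_4 = a'_2 a'_3$. Parametrize $a'_i = R t_i$ and $b'_i = R/t_i$; then the two bounds become $t_i \in [R/b_i,\, a_i/R]$, an interval that is non-empty because $a_i b_i \ge R^2$ by $\Delta_\varrho \ge R$. Writing $s_i = \log t_i$, the constraint $t_1 t_4 = t_2 t_3$ becomes the linear equation $s_1 - s_2 - s_3 + s_4 = 0$ on the box $\prod_i [\log(R/b_i),\,\log(a_i/R)]$. Zero lies in the range of this linear functional on the box if and only if
\[
\log(R/b_1)+\log(R/b_4) \le \log(a_2/R)+\log(a_3/R) \quad\text{and}\quad \log(R/b_2)+\log(R/b_3) \le \log(a_1/R)+\log(a_4/R),
\]
i.e.\ $R^4 \le b_1 a_2 a_3 b_4$ and $R^4 \le a_1 b_2 b_3 a_4$, which are exactly the two fourth-root inequalities packaged into $\Delta_\varrho \ge R$.

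Choosing such $t_i$, set $\sigma = X(a',b',c)$ and $D = \varrho - \sigma$. Each $2\times 2$ anti-diagonal block of $\sigma$ has determinant $a'_i b'_i - |c_i|^2 = 0$ and strictly positive trace, so $\sigma$ is positive semidefinite of rank four and visibly non-diagonal; the three identities (\ref{rank-4}) hold by construction together with the given $c_1 c_4 = c_2 c_3$, so Theorem \ref{rank-four} gives separability of $\sigma$. The residue $D$ is diagonal with non-negative entries $a_i - a'_i$ and $b_i - b'_i$, so it is separable; hence $\varrho = \sigma + D$ is separable.

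The main point is the existence of the $t_i$: this is where the two fourth-root inequalities in the definition of $\Delta_\varrho$ are really needed, and they match the two feasibility inequalities on the nose, which is what makes the sufficient condition as tight as it is.
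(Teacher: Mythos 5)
Your proof is correct and follows essentially the same route as the paper: both decompose $\varrho$ into a rank-four separable {\sf X}-state $X(a',b',c)$ satisfying (\ref{rank-4}) plus a nonnegative diagonal remainder, with the existence of the intermediate diagonal $(a',b')$ secured by exactly the two fourth-root inequalities in $\Delta_\varrho\ge R_\varrho$. Your log-coordinate linear-feasibility phrasing is just a reparametrization of the paper's interval-product argument, so there is nothing further to add.
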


\begin{proof}
If $\varrho$ is diagonal, then there is nothing to prove.
When $\varrho$ is not diagonal, we have $R_\varrho=r_\varrho>0$, and may assume that
$R_\varrho=r_\varrho=1$ by multiplying a scalar.
We consider the following three intervals
$$
I_1=[1/a_1,b_1],\qquad I_2=[1/b_2,a_2],\qquad I_3=[1/b_3,a_3],
$$
which are nonempty by the assumption $a_ib_i\ge 1$ for $i=1,2,3$. Therefore, we see that
every real number in the interval $[1/a_1b_2b_3, b_1a_2a_3]$ can be written by the product
$x_1x_2x_3$ of real numbers $x_i\in I_i$ for $i=1,2,3$. Now, we see that the intersection
$[1/b_4,a_4]\cap [1/a_1b_2b_3, b_1a_2a_3]$ of the two intervals is nonempty by the assumption
$a_1b_2b_3a_4\ge 1$ and $b_1a_2a_3b_4\ge 1$. Hence, we can take $x_i$ satisfying
$$
\frac 1{a_1}\le x_1\le b_1,\quad
\frac 1{b_2}\le x_2\le a_2,\qquad
\frac 1{b_3}\le x_3\le a_3,\qquad
\frac 1{b_4}\le x_1x_2x_3\le a_4.
$$
Now, we define $a_i^\prime$ and $b^\prime_i$ by
$$
a_1^\prime=\frac 1{x_1},\quad a^\prime_2=x_2,\quad a^\prime_3=x_3,\quad a^\prime_4=x_1x_2x_3,
\quad b^\prime_i=1/a^\prime_i\ (i=1,2,3,4).
$$
Then we have $a^\prime_i\le a_i$ and $b^\prime_i\le b_i$ for $i=1,2,3,4$, and the {\sf X}-state
$\varrho^\prime=X(a^\prime, b^\prime,c)$ is a separable state of rank four by Theorem \ref{rank-four}. Therefore, we see that
$\varrho$ is the sum of a separable state $\varrho^\prime$ and a diagonal state.
\end{proof}

We denote by $\mathbb S$ the convex set of all three-qubit separable states.
A separable state $\varrho$ determines a unique face $F_\varrho$ of $\mathbb S$
such that $\varrho$ is an interior point of $F_\varrho$. This is the smallest face containing $\varrho$.
A nonempty convex subset $F$ of a convex set $C$ is said to be a face of $C$
if $x,y\in F$ whenever a nontrivial convex combination of $x,y\in C$ belongs to $F$.
A separable state $\varrho$
has a unique decomposition into pure product states if and only if $F_\varrho$ is a simplex.
In this case, the set of extreme points of $F_\varrho$ consists of pure product states in the decomposition.
Theorem \ref{rank-four} tells us that
if $\varrho$ is a non-diagonal {\sf X}-state of rank four, then $F_\varrho$ is a simplicial face,
that is, a face which is a simplex,
whose extreme points arise from four product vectors in the set
$$
\Pi_\xi=\{|\xi(k)\ran: k=0,1,2,3\}
$$
listed in (\ref{decom-rank1}) for a product vector $|\xi\ran$.
We note that $\Pi_{\xi(k)}=\Pi_\xi$ for $k=0,1,2,3$.

\begin{proposition}
If $\varrho$ is a non-diagonal three-qubit separable {\sf X}-state of rank four, then
the face $F_\varrho$ coincides with the set $\{\omega \in \mathbb S : \omega_X = \varrho \}$.
\end{proposition}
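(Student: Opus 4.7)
\medskip

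\noindent\textbf{Proof plan.}
By Proposition \ref{xpart-1} applied to $\varrho$ (which is the {\sf X}-part of a pure product state $|\xi\rangle\langle\xi|$ by Theorem \ref{rank-four}(iii)), the state $\varrho$ admits a unique decomposition into pure product states, namely $\varrho = \tfrac{1}{4}\sum_{k=0}^3 |\xi(k)\rangle\langle\xi(k)|$, with $|\xi(k)\rangle$ as in (\ref{decom-rank1}). As already remarked before the statement, this forces $F_\varrho$ to be the simplex with extreme points $|\xi(k)\rangle\langle\xi(k)|$ for $k=0,1,2,3$. The plan is to prove the two set inclusions directly from this description.

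The forward inclusion $F_\varrho \subseteq \{\omega \in \mathbb S : \omega_X = \varrho\}$ is immediate: the {\sf X}-parts of the four pure product states $|\xi(k)\rangle\langle\xi(k)|$ all coincide with $\varrho$, so linearity of the {\sf X}-part gives $\omega_X = \varrho$ for any convex combination of them.

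For the reverse inclusion, I would pick $\omega \in \mathbb S$ with $\omega_X = \varrho$, fix a decomposition $\omega = \sum_i \lambda_i |\eta_i\rangle\langle\eta_i|$ into pure product states, and split the indices into $I_1 = \{i : |\eta_i\rangle \text{ has no zero entry}\}$ and $I_2 = \{i : |\eta_i\rangle \text{ has some zero entry}\}$. For $i \in I_1$, equation (\ref{average}) yields $(|\eta_i\rangle\langle\eta_i|)_X = \tfrac{1}{4}\sum_{k=0}^3 |\eta_i(k)\rangle\langle\eta_i(k)|$ with the $|\eta_i(k)\rangle$ built from $|\eta_i\rangle$ as in (\ref{decom-rank1}). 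For $i \in I_2$, inspection of the anti-diagonal entries $\eta_i(j)\overline{\eta_i(9-j)}$ shows that $(|\eta_i\rangle\langle\eta_i|)_X$ is diagonal, hence expressible as a positive combination of the computational-basis projections $|e_j\rangle\langle e_j|$. Assembling these pieces, the identity $\omega_X = \varrho$ rewrites as a decomposition of $\varrho$ into pure product states which, by the uniqueness from Proposition \ref{xpart-1}, must match $\tfrac{1}{4}\sum_{k=0}^3|\xi(k)\rangle\langle\xi(k)|$.

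The key step, which I expect to be the main obstacle, is translating this match of decompositions back into a statement about $\omega$. Since every $|\xi(k)\rangle$ has only nonzero entries, no $|e_j\rangle$ can coincide with any $|\xi(k)\rangle$; the uniqueness therefore kills all $I_2$ contributions, forcing $\lambda_i = 0$ for $i \in I_2$. For $i \in I_1$ with $\lambda_i > 0$, every $|\eta_i(k)\rangle$ must lie in $\Pi_\xi$, and since both $\Pi_{\eta_i}$ and $\Pi_\xi$ are four-element orbits under the $\pm$-operation of (\ref{decom-rank1}), the inclusion $\Pi_{\eta_i} \subseteq \Pi_\xi$ is in fact an equality; noting $|\eta_i\rangle = |\eta_i(0)\rangle$ then gives $|\eta_i\rangle \in \Pi_\xi$. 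Therefore $\omega$ is a convex combination of the $|\xi(k)\rangle\langle\xi(k)|$, so $\omega \in F_\varrho$.
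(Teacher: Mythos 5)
Your proposal is correct and follows essentially the same route as the paper: the forward inclusion comes from the fact that all four $|\xi(k)\rangle\langle\xi(k)|$ share the {\sf X}-part $\varrho$, and the reverse inclusion comes from applying (\ref{average}) to a pure-product decomposition of $\omega$ and invoking the uniqueness of the decomposition of $\varrho$ to force each $|\eta_i\rangle$ into $\Pi_\xi$. Your explicit treatment of the product vectors with a zero entry (whose {\sf X}-parts are diagonal and hence excluded by uniqueness) is a detail the paper leaves implicit, not a genuinely different argument.
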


\begin{proof}
Take a product vector $|\xi\ran$ so that $\varrho$ is the {\sf X}-part of $|\xi\ran\lan\xi|$.
Then $F_\varrho$ consists of convex combinations of $|\xi(k)\ran\lan\xi(k)|$ with $k=0,1,2,3$.
Since the {\sf X}-part of each $|\xi(k)\ran\lan\xi(k)|$ is $\varrho$,
we see that $F_\varrho$ is a subset of $\{\omega \in \mathbb S : \omega_X = \varrho \}$.
Suppose that the {\sf X}-part of $\omega = \sum_i \lambda_i |\eta_i\ran\lan\eta_i|$ coincides with $\varrho$.
By (\ref{average}), we have
$$
\varrho = \omega_X = \sum_i {\lambda_i \over 4}
(|\eta_i(0)\ran\lan\eta_i(0)|
+|\eta_i(1)\ran\lan\eta_i(1)|
+|\eta_i(2)\ran\lan\eta_i(2)|
+|\eta_i(3)\ran\lan\eta_i(3)|).
$$
By the uniqueness part of Theorem \ref{rank-four}, each $|\eta_i\ran$ is one of $|\xi(k)\ran$,
and so we conclude that $\omega$ belongs to $F_\varrho$.
\end{proof}

We have established one-to-one correspondence between the following objects:
\begin{itemize}
\item
non-diagonal three-qubit separable {\sf X}-states $\varrho$ of rank four,
\item
the set $\Pi_\xi$ of four product vectors with nonzero entries,
\item
the $3$-dimensional simplicial faces $F_\varrho$ determined by $\Pi_\xi$.
\end{itemize}
The {\sf X}-state $\varrho$ is located at the center of the simplex $F_\varrho$.
If $\varrho_1$ and $\varrho_2$ are distinct non-diagonal separable {\sf X}-states
of rank four, then two faces $F_{\varrho_1}$ and $F_{\varrho_2}$ have no intersection by the above proposition.
If we denote  by $\mathbb S_{\text{\sf X}}$ the convex set of three-qubit separable {\sf X}-states, then $\mathbb
S_{\text{\sf X}}\cap F_\varrho$ consists of a single point
$\varrho$. This shows that $\varrho$ is an extreme point of the
convex set $\mathbb S_{\text{\sf X}}$. On the other hand, Proposition \ref{extreme}
tells us that an extreme point of $\mathbb S_{\text{\sf X}}$ is a non-diagonal state of rank four
or a diagonal state.

\begin{theorem}
A three-qubit separable {\sf X}-state is an extreme point of the convex set $\mathbb S_{\text{\sf X}}$ if and only if
it is non-diagonal of rank four, or diagonal of rank one.
\end{theorem}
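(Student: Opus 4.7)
The plan is to combine Proposition \ref{extreme} with the observation recorded just before the theorem statement that a non-diagonal separable {\sf X}-state of rank four $\varrho$ satisfies $\mathbb S_{\text{\sf X}}\cap F_\varrho=\{\varrho\}$; the remaining work is only to handle the diagonal case and to package everything as an \textit{if and only if}.

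For the sufficiency direction, I would treat the two cases separately. If $\varrho$ is non-diagonal of rank four, then the discussion immediately preceding the theorem shows that the face $F_\varrho$ of $\mathbb S$ meets $\mathbb S_{\text{\sf X}}$ only at $\varrho$ itself; so whenever $\varrho=\tfrac12(\omega_1+\omega_2)$ with $\omega_i\in\mathbb S_{\text{\sf X}}$, the face property of $F_\varrho$ (inherited from $\mathbb S$) forces $\omega_1,\omega_2\in F_\varrho\cap\mathbb S_{\text{\sf X}}=\{\varrho\}$, so $\varrho$ is extreme in $\mathbb S_{\text{\sf X}}$. If $\varrho$ is diagonal of rank one, then $\varrho=|e_i\rangle\langle e_i|$ for a computational basis vector, which is a pure product state and therefore an extreme point of $\mathbb S$; a fortiori it is extreme in the subset $\mathbb S_{\text{\sf X}}$.

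For the necessity direction, I would invoke Proposition \ref{extreme}: an extreme point $\varrho$ of $\mathbb S_{\text{\sf X}}$ is a convex combination of non-diagonal separable {\sf X}-states of rank four and diagonal states, and being extreme, $\varrho$ must coincide with one of the summands. So $\varrho$ is either a non-diagonal separable {\sf X}-state of rank four or a diagonal state. In the diagonal case, $\varrho$ is a convex combination of the eight rank-one diagonal states $|e_i\rangle\langle e_i|$, each of which lies in $\mathbb S_{\text{\sf X}}$; so extremality in $\mathbb S_{\text{\sf X}}$ forces $\varrho$ itself to be one of them, hence of rank one.

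No step is really an obstacle; the substantive input has already been produced, namely Proposition \ref{extreme} (which gives the decomposition from which the ``only if'' direction follows) and the preceding proposition identifying $F_\varrho$ with $\{\omega\in\mathbb S:\omega_X=\varrho\}$ (which underlies $\mathbb S_{\text{\sf X}}\cap F_\varrho=\{\varrho\}$). The only mildly delicate point is making sure that ``extreme in $\mathbb S_{\text{\sf X}}$'' is used consistently: the non-diagonal rank-four states need only the weaker property $\mathbb S_{\text{\sf X}}\cap F_\varrho=\{\varrho\}$, while the diagonal case needs to exclude higher-rank diagonal states explicitly by expanding on the computational basis.
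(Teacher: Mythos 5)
Your proof is correct and follows essentially the same route as the paper: the sufficiency for non-diagonal rank-four states comes from $\mathbb S_{\text{\sf X}}\cap F_\varrho=\{\varrho\}$ together with the face property of $F_\varrho$ in $\mathbb S$, and the necessity comes from Proposition \ref{extreme}. The only detail you add beyond the paper's (rather terse) discussion is the explicit handling of the diagonal case via the decomposition into the eight computational-basis projections, which is exactly the intended completion.
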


\section{Sufficient criteria for separability}

In this section, we consider the case of $R_\varrho=r_\varrho$, and show that the
inequality {\rm (\ref{Phi})} is sufficient for separability of
an {\sf X}-state $\varrho=X(a,b,c)$. Actually, we will express $\varrho$ as the mixture of
two {\sf X}-states satisfying the conditions in Corollary \ref{suff-4}.

We assume that
$|c_i|=1$ and $0\le\phi_\varrho<2\pi$ without loss of generality. In this case, we have
$A_\varrho=\sqrt{1+\sin \phi_\varrho /2}$ by Lemma \ref{T_theta}. For
notational convenience, write $\phi=\phi_\varrho/2$ and
$r= A_\varrho=\sqrt{1+\sin \phi}=\sin \phi/2 +\cos \phi/2$. Then we have
$$
\begin{aligned}
\tan {\textstyle{\phi \over 2}}~ \left( 1-re^{{\rm i}(\phi/2-\pi/2)} \right)
= &\tan {\textstyle{\phi \over 2}}~ \left( 1+ {\rm i} re^{{\rm i}(\phi/2)} \right) \\
= &\tan {\textstyle{\phi \over 2}}~ \left( 1- r\sin {\textstyle{\phi
\over 2}}
  + {\rm i} r \cos {\textstyle{\phi \over 2}} \right) \\
= &\tan {\textstyle{\phi \over 2}}~ \left( \cos^2 {{\textstyle{\phi
\over 2}}}
   - \sin {\textstyle{\phi \over 2}} \cos {\textstyle{\phi \over 2}} \right)
     + {\rm i} r \sin {\textstyle{\phi \over 2}} \\
= &r \cos {\textstyle{\phi \over 2}} -1 + {\rm i} r \sin {{\textstyle{\phi \over 2}}}\\
= &re^{{\rm i}(\phi/2)}-1,
\end{aligned}
$$
which tells us that the three points $re^{{\rm i}(\phi/2)}, 1,
re^{{\rm i}(\phi/2-\pi/2)}$ are co-linear on the complex plane.
See FIGURE 3.

\setlength{\unitlength}{.8 mm}
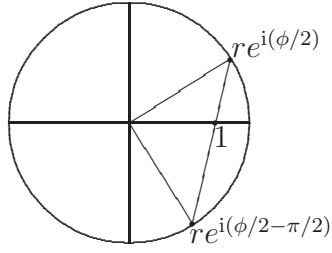
\begin{figure}
\setlength{\unitlength}{.8 mm}
\begin{center}
\begin{picture}(50,50)
  \qbezier(25.000,0.000)(33.284,0.000)
          (39.142,5.858)
  \qbezier(39.142,5.858)(45.000,11.716)
          (45.000,20.000)
  \qbezier(45.000,20.000)(45.000,28.284)
          (39.142,34.142)
  \qbezier(39.142,34.142)(33.284,40.000)
          (25.000,40.000)
  \qbezier(25.000,40.000)(16.716,40.000)
          (10.858,34.142)
  \qbezier(10.858,34.142)( 5.000,28.284)
          ( 5.000,20.000)
  \qbezier( 5.000,20.000)( 5.000,11.716)
          (10.858,5.858)
  \qbezier(10.858,5.858)(16.716,0.000)
          (25.000,0.000)
\drawline(5,20)(45,20)
\drawline(25,0)(25,40)
\drawline(25,20)(35.5,3.2)
\drawline(25,20)(41.8,30.5)
\drawline(35.5,3.2)(41.8,30.5)
\put(35.5,3.2){\circle*{1}}
\put(41.8,30.5){\circle*{1}}
\put(39.3,20){\circle*{1}}
\put(39,16){$1$}
\put(42,31){$re^{{\rm i}(\phi/2)}$}
\put(35,0){$re^{{\rm i}(\phi/2-\pi/2)}$}
\end{picture}
\end{center}
\caption{Three complex numbers $re^{{\rm i}(\phi/2)}$, $1$ and  $re^{{\rm i}(\phi/2-\pi/2)}$
are co-linear on the complex plane. Note that $0\le\phi/2<\pi/2$.}
\end{figure}

Therefore, we can take nonnegative numbers $p$ and $q$ such that
$$
p+q=1,\qquad 1=p re^{{\rm i}(\phi/2)} + q re^{{\rm i}(\phi/2-\pi/2)}.
$$
If we put $u=re^{{\rm i}(\phi/2)}$ and $v=re^{{\rm i}(\phi/2-\pi/2)}$, then we have $pu+qv=1$.
Put $\varrho_1=X(a,b, c^{\prime})$ and $\varrho_2=X(a,b, c^{\prime\prime})$ with
$$
c^\prime=(c_1 \bar u, c_2 u, c_3 u, c_4 \bar u),\qquad
c^{\prime\prime}=(c_1\bar v, c_2 v, c_3 v, c_4 \bar v).
$$
Now, we suppose that $\varrho$ satisfies the inequality (\ref{theta-delta}). Then we have
$$
\varrho= p\varrho_1+q\varrho_2,\qquad
\Delta_{\varrho_i}=\Delta_{\varrho}\ge  A_\varrho=r=  R_{\varrho_i},
$$
for $i=1,2$. Furthermore, we have
$$
\begin{aligned}
(c_1 \bar u)(c_4 \bar u)
&= |c_1||c_4|r^2 e^{{\rm i} (\theta_1+\theta_4 - \phi)}\\
&= |c_2||c_3|r^2 e^{{\rm i} (\theta_2+\theta_3 + \phi)} = (c_2 u)(c_3 u),
\end{aligned}
$$
and $(c_1 \bar v)(c_4 \bar v)=(c_2 v)(c_3 v)$ similarly.
Therefore, we see that $\varrho_1$ and $\varrho_2$ are separable by Corollary
\ref{suff-4}, and so {\rm (\ref{theta-delta})} implies the
separability of $\varrho$ when $R_\varrho=r_\varrho$.
We know that the separability of $\varrho$ implies {\rm (\ref{theta-delta})},
and two criteria {\rm (\ref{theta-delta})} and  (\ref{Phi}) are
equivalent when $R_\varrho=r_\varrho$ by Lemma \ref{T_theta}. Therefore, we have the following:

\begin{theorem}\label{main-equiv}
Let $\varrho=X(a,b,c)$ be a three-qubit {\sf X}-state with a common anti-diagonal magnitude.
Then $\varrho$ is separable if and only if $\varrho$ satisfies the inequality {\rm (\ref{theta-delta})} if and only if
$\varrho$ satisfies the inequality {\rm (\ref{Phi})}.
\end{theorem}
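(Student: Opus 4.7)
The plan is to establish a triangle of equivalences: separability $\Rightarrow$ (\ref{theta-delta}) $\Rightarrow$ (\ref{Phi}) $\Rightarrow$ separability, under the common magnitude hypothesis $R_\varrho = r_\varrho$. The first implication is exactly Proposition for (\ref{theta-delta}), and the equivalence (\ref{theta-delta}) $\Leftrightarrow$ (\ref{Phi}) under the common magnitude assumption follows immediately from Lemma \ref{T_theta}, since $A_\varrho = R\sqrt{1 + |\sin\phi_\varrho/2|} = r_\varrho\sqrt{1 + |\sin\phi_\varrho/2|}$. The substantive content is therefore the reverse direction: assuming $R_\varrho = r_\varrho$ and (\ref{theta-delta}), I must produce a separable decomposition.

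My strategy for the reverse direction is to write $\varrho$ as a convex combination of two X-states, each of which satisfies the sufficient condition in Corollary \ref{suff-4}, namely the phase identity $c_1 c_4 = c_2 c_3$ together with $\Delta \ge R = r$. After scaling I assume all anti-diagonal magnitudes equal $1$ and set $\phi = \phi_\varrho/2$, $r = \sqrt{1 + \sin\phi}$. The key observation is a geometric one on the complex plane: the three points $re^{{\rm i}\phi/2}$, $1$, and $re^{{\rm i}(\phi/2 - \pi/2)}$ are collinear (I would verify this by the short algebraic identity $\tan(\phi/2)(1 - re^{{\rm i}(\phi/2 - \pi/2)}) = re^{{\rm i}\phi/2} - 1$, which puts $1$ on the segment joining the two other points). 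Consequently I can write $1 = pu + qv$ with $p, q \ge 0$, $p + q = 1$, $u = re^{{\rm i}\phi/2}$, $v = re^{{\rm i}(\phi/2 - \pi/2)}$.

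Given this, I define $\varrho_1 = X(a,b,c')$ and $\varrho_2 = X(a,b,c'')$ with $c' = (c_1\bar u, c_2 u, c_3 u, c_4 \bar u)$ and $c'' = (c_1 \bar v, c_2 v, c_3 v, c_4 \bar v)$. The relation $pu + qv = 1$ (together with its conjugate $p\bar u + q\bar v = 1$) guarantees $\varrho = p\varrho_1 + q\varrho_2$. Since $|u| = |v| = r = A_\varrho$, we have $R_{\varrho_i} = r$, so the hypothesis $\Delta_\varrho \ge A_\varrho$ gives $\Delta_{\varrho_i} = \Delta_\varrho \ge R_{\varrho_i} = r_{\varrho_i}$. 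Finally, the product of the outer anti-diagonal entries of $\varrho_1$ is $|c_1||c_4| r^2 e^{{\rm i}(\theta_1+\theta_4-\phi_\varrho/2 \cdot 2)} = |c_2||c_3| r^2 e^{{\rm i}(\theta_2+\theta_3 + \phi_\varrho/2 \cdot 2)}$ which equals the inner product, using the definition of $\phi_\varrho$; the same holds for $\varrho_2$. Hence both $\varrho_1$ and $\varrho_2$ satisfy the hypotheses of Corollary \ref{suff-4} and are separable, so $\varrho$ is separable.

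The main obstacle I foresee is simply the collinearity identity and the correct book-keeping of phases — the rest is an almost mechanical verification. One must be careful that the convex coefficients $p, q$ are indeed nonnegative, which is where it matters that $1$ lies between $u$ and $v$ on the line they span rather than outside that segment; this is the geometric content captured by $0 \le \phi < 2\pi$ and the explicit form $r = \sin(\phi/2) + \cos(\phi/2)$. Once this is secured, Corollary \ref{suff-4} does all the remaining work.
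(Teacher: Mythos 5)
Your proposal is correct and follows essentially the same route as the paper: the same reduction via Lemma \ref{T_theta} to the equivalence of (\ref{theta-delta}) and (\ref{Phi}) when $R_\varrho=r_\varrho$, the same collinearity identity for $re^{{\rm i}\phi/2}$, $1$, $re^{{\rm i}(\phi/2-\pi/2)}$, and the same convex splitting $\varrho=p\varrho_1+q\varrho_2$ with $c'=(c_1\bar u,c_2u,c_3u,c_4\bar u)$ feeding into Corollary \ref{suff-4}. The only blemish is the garbled exponent ``$\theta_1+\theta_4-\phi_\varrho/2\cdot 2$'' in your phase check (it should read $\theta_1+\theta_4-\phi_\varrho/2$, matched against $\theta_2+\theta_3+\phi_\varrho/2$), but the conclusion $c_1'c_4'=c_2'c_3'$ is right.
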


We return to the four dimensional convex body
$\mathbb S$ consisting of $(r_1,r_2,r_3,r_4)$ so that the {\sf
X}-state $X(a,b,c)$ is separable. See FIGURE 2 again. Theorem \ref{main-equiv}
tells us that the \lq corner point\rq\ of the union of strips belongs to the
region $\mathbb S$, and so, the bound (\ref{Phi}) for the minimum of
anti-diagonal magnitudes is optimal. Furthermore, the \lq corner point\rq\
represents a boundary separable state with full ranks whenever
$\phi_\varrho\neq 0$. Construction of such a state has been asked in
\cite{chen_full_sep} and answered in \cite{kye_3qb_EW}. We add here
more examples.

In general cases with arbitrary anti-diagonal magnitudes, we have the following sufficient condition
for separability, which tells us that the the region $\mathbb S$ lies between two cubes
determined by (\ref{ppt-diag}) and (\ref{suff-Phi}) below:

\begin{proposition}
Let $\varrho=X(a,b,c)$ be a three-qubit {\sf X}-state. Then $\varrho$ is separable whenever
it satisfies the following inequality:
\begin{equation}\label{suff-Phi}
\Delta_\varrho \ge R_\varrho \sqrt{1+\max\{|\sin\phi_{\varrho}/2|, |\cos\phi_{\varrho}/2|\}}
\end{equation}
\end{proposition}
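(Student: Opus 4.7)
The plan is to exhibit $\varrho$ as a convex combination of common-magnitude {\sf X}-states, each of which will be separable by Theorem~\ref{main-equiv} under our hypothesis. The guiding observation is that flipping the sign of the $i$-th anti-diagonal entry shifts the phase difference of an {\sf X}-state by $\pi$, which converts $|\sin(\phi_\varrho/2)|$ into $|\cos(\phi_\varrho/2)|$ in the criterion of Theorem~\ref{main-equiv}; since both quantities are bounded by the right-hand side of (\ref{suff-Phi}), every sign-flipped state will be separable.

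For each sign pattern $\sigma=(\sigma_1,\sigma_2,\sigma_3,\sigma_4)\in\{0,1\}^4$, I would define $\varrho_\sigma=X(a,b,c^\sigma)$ with $c_i^\sigma=(-1)^{\sigma_i}R_\varrho e^{{\rm i}\theta_i}$. The weights will be the joint law of independent Bernoulli variables with $P(\sigma_i=1)=p_i:=(R_\varrho-r_i)/(2R_\varrho)\in[0,1/2]$, namely $w_\sigma=\prod_{i=1}^{4}p_i^{\sigma_i}(1-p_i)^{1-\sigma_i}$. A direct computation using the marginal identity $\sum_\sigma w_\sigma(1-2\sigma_i)=1-2p_i=r_i/R_\varrho$ gives
\[
\sum_\sigma w_\sigma c_i^\sigma=R_\varrho e^{{\rm i}\theta_i}(1-2p_i)=r_ie^{{\rm i}\theta_i}=c_i,
\]
and because every $\varrho_\sigma$ shares the diagonal of $\varrho$, this yields $\varrho=\sum_\sigma w_\sigma\varrho_\sigma$.

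Each $\varrho_\sigma$ is a valid state since $\sqrt{a_ib_i}\ge\Delta_\varrho\ge R_\varrho$ ensures positivity of its $2\times 2$ anti-diagonal blocks. Its phase difference is $\phi_{\varrho_\sigma}=\phi_\varrho+\pi(\sigma_1+\sigma_4-\sigma_2-\sigma_3)$, which modulo $2\pi$ equals $\phi_\varrho$ when $\sigma_1+\sigma_2+\sigma_3+\sigma_4$ is even and $\phi_\varrho+\pi$ when odd. Thus $|\sin(\phi_{\varrho_\sigma}/2)|$ is either $|\sin\phi_\varrho/2|$ or $|\cos\phi_\varrho/2|$, both at most $\max\{|\sin\phi_\varrho/2|,|\cos\phi_\varrho/2|\}$. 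Since $\Delta_{\varrho_\sigma}=\Delta_\varrho$ and $R_{\varrho_\sigma}=R_\varrho$, the hypothesis (\ref{suff-Phi}) implies $\Delta_{\varrho_\sigma}\ge R_{\varrho_\sigma}\sqrt{1+|\sin(\phi_{\varrho_\sigma}/2)|}$ for every $\sigma$, so Theorem~\ref{main-equiv} makes each $\varrho_\sigma$ separable; hence so is $\varrho$.

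The only substantive step is identifying the right decomposition; once one notices that the $16$ sign patterns split according to the parity of $\sum_i\sigma_i$ into two classes whose phase differences are exactly $\phi_\varrho$ and $\phi_\varrho+\pi$, everything reduces to routine verification and a direct application of Theorem~\ref{main-equiv}, with no need to analyze individual sign choices or to solve any trigonometric optimization in the four variables $r_i$.
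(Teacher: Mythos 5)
Your proof is correct and follows essentially the same route as the paper: decompose $\varrho$ as a convex combination of sign-flipped states $X(a,b,R_\varrho\Phi^\epsilon)$ with common anti-diagonal magnitude $R_\varrho$, observe that each has phase difference $\phi_\varrho$ or $\phi_\varrho+\pi$ so that $|\sin(\phi_{\varrho_\sigma}/2)|\le\max\{|\sin\phi_\varrho/2|,|\cos\phi_\varrho/2|\}$, and invoke Theorem~\ref{main-equiv}. The only difference is that you make explicit the Bernoulli product weights $w_\sigma$ realizing the convex combination, which the paper leaves implicit.
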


\begin{proof}
For a string $\epsilon=\epsilon_1\epsilon_2\epsilon_3\epsilon_4$ of $\pm 1$, we
consider the {\sf X}-state $\varrho^\epsilon=X(a,b,
R_\varrho\Phi^\epsilon_\varrho)$ with the phase part
$\Phi^\epsilon_\varrho= (\epsilon_1e^{{\rm i}\theta_1},
\epsilon_2e^{{\rm i}\theta_2}, \epsilon_3e^{{\rm i}\theta_3},
\epsilon_4e^{{\rm i}\theta_4})$. Since
$$
|\sin\phi_{\varrho^\epsilon}/2|\le \max\{|\sin\phi_\varrho/2|,|\cos\phi_\varrho/2|\}
$$
in general, the
inequality (\ref{suff-Phi}) tells us that the criterion (\ref{Phi})
holds for the state $\varrho^\epsilon$. Because each
$\varrho^\epsilon$ shares a common anti-diagonal magnitude
$R_\varrho$, we see that $\varrho^\epsilon$ is separable by Theorem \ref{main-equiv} for each
string $\epsilon$. This shows that the state $\varrho$ is also
separable since $\varrho$ is a convex combination of
$\varrho^\epsilon$'s.
\end{proof}

{\sl Example 1}:
Consider the {\sf X}-shaped matrix
$$
\varrho_{r,\theta}=X({\bf 1}, {\bf 1}, (r,r,re^{{\rm i}\theta},r)),
$$
where ${\bf 1}=(1,1,1,1)$.
Then $\varrho_{r,\theta}$ is a state if and only if it is a PPT state if and only if
$r\le 1$. On the other hand, our main criterion (\ref{Phi}) tells us that $\varrho_{r,\theta}$ is
separable if and only if
$$
1\ge r\sqrt{1+|\sin(\theta/2)|}.
$$
This example shows clearly the role of phases for the criteria of separability.
See FIGURE 4. When $\theta=0$ and $c=(r,r,r,r)$, we see that $\varrho_{r,0}$ is separable if and only if $r\le 1$.
On the other hand, in the case of $\theta=\pi$ and $c=(r,r,-r,r)$, it is known \cite{{guhne_pla_2011},{han_kye_GHZ},{kay_2011}}
that $\varrho_{r,\pi}$ is separable if and only if $r\le \frac 1{\sqrt 2}$.
Our result interpolates these two boundary separable states $\varrho_{1,0}$ and $\varrho_{1/\sqrt 2,\, \pi}$
to get a one parameter family $\varrho_{r,\theta}$ of boundary separable states, with the curve $r=1/\sqrt{1+|\sin(\theta/2)|}$.
$\square$\par\medskip

\setlength{\unitlength}{.6 mm}
\begin{figure}
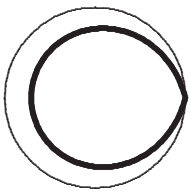

\input 3-qb-phase_pic_3.tex
\end{picture}
\end{center}
\caption{The thick curve is $1=r\sqrt{1+|\sin(\theta/2)|}$ on the
complex plane with the polar coordinate, which represents the
boundary of the region for separability of the {\sf X}-state $\varrho_{r,\theta}=X({\bf
1},{\bf 1}, (r,r,re^{{\rm i}\theta},r))$. The circle represents the
region of PPT property.}
\end{figure}

{\sl Example 2}:
We examine various criteria for the {\sf X}-shaped matrix
$$
\varrho_{p,q} =X({\bf 1},{\bf 1},(p,p,q,-q))
$$
with real numbers $p$ and $q$. We note that
$\varrho_{p,q}$ is a state if and only if it is of PPT if and only if
$1\ge \max\{|p|,|q|\}$. This a GHZ diagonal state. The inequality
(\ref{theta-delta}) is just $1\ge (|p|+|q|)/\sqrt 2$, which does not
detect entanglement when $|p|/|q|$ is big or small enough. One can
also check that $\varrho_{p,q}$ is separable if and only if $1\ge
\sqrt{p^2+q^2}$ by the result in \cite{han_kye_GHZ}. See FIGURE 5.
$\square$\par\medskip

\begin{figure}\label{figqqqqq}
\setlength{\unitlength}{.6 mm}
\begin{center}
\begin{picture}(50,80)
  \qbezier(25.000,20.000)(33.284,20.000)
          (39.142,25.858)
  \qbezier(39.142,25.858)(45.000,31.716)
          (45.000,40.000)
  \qbezier(45.000,40.000)(45.000,48.284)
          (39.142,54.142)
  \qbezier(39.142,54.142)(33.284,60.000)
          (25.000,60.000)
  \qbezier(25.000,60.000)(16.716,60.000)
          (10.858,54.142)
  \qbezier(10.858,54.142)( 5.000,48.284)
          ( 5.000,40.000)
  \qbezier( 5.000,40.000)( 5.000,31.716)
          (10.858,25.858)
  \qbezier(10.858,25.858)(16.716,20.000)
          (25.000,20.000)
\drawline(39.142,25.858)(59.142,25.858)
\drawline(39.142,25.858)(39.142,5.858)
\drawline(39.142,54.142)(59.142,54.142)
\drawline(39.142,54.142)(39.142,74.142)
\drawline(10.858,54.142)(-10.142,54.142)
\drawline(10.858,54.142)(10.858,74.142)
\drawline(10.858,25.858)(-10.142,25.858)
\drawline(10.858,25.858)(10.858,5.858)
\drawline(25,11.7)(53.3,40)
\drawline(25,68.3)(53.3,40)
\drawline(25,68.3)(-3.3,40)
\drawline(25,11.7)(-3.3,40)
\end{picture}
\end{center}
\caption{The circle centered at the origin on the $pq$-plane
represents the region for separability of $\varrho_{p,q}=X({\bf 1},{\bf
1},(p,p,q,-q))$. The diamond and the union of horizontal and
vertical strips represent the regions satisfying the inequalities
(\ref{theta-delta}) and (\ref{Phi}), respectively. Two cubes by the
conditions (\ref{ppt-diag}) and (\ref{suff-Phi}) are squares (which
are not shown in the picture) circumscribing and inscribing the
circle, respectively.}
\end{figure}
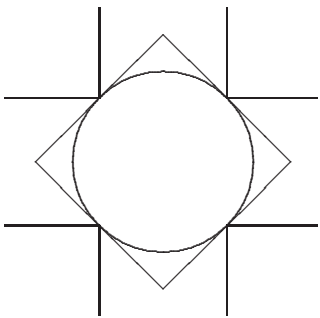

We compare two separability criteria (\ref{Phi}) and
(\ref{theta-delta}). The criterion (\ref{Phi}) shows the role of
phases more directly, but it is weaker criterion than
(\ref{theta-delta}) by the inequality (\ref{yyyyy}). These two
criteria are equivalent to each other when the anti-diagonal entries
share a common magnitude, by Lemma \ref{T_theta}. We note that the inequality $r_\varrho\le R_\varrho\le
\Delta_\varrho$ holds for general separable states. Therefore, the smaller is the ratio
$R_\varrho/ r_\varrho$, the sharper is the criterion (\ref{Phi}). In
fact, we have shown in Theorem \ref{main-equiv} that each of two criteria also gives rise to a
sufficient condition for separability of {\sf X}-states when
$R_\varrho= r_\varrho$. On the other hand, they are of little use to
detect entanglement when the ratio $R_\varrho/r_\varrho$ is big, as
we have seen in the example $\varrho_{p,q}=X({\bf 1},{\bf 1},(p,p,q,-q))$.

We note that these two criteria (\ref{Phi}) and (\ref{theta-delta}) depend on the criterion
({\ref{guhne}) which is not so easy to apply directly. In the case
when all the $z_i$'s are real, the maximum part in the criterion ({\ref{guhne}) can
be evaluated in terms of $z_i$'s. This was useful in
\cite{han_kye_GHZ} to characterize separability of GHZ diagonal
states, where all the anti-diagonal entries are real numbers.

\section{Rank five and six cases}

We have seen that every non-diagonal separable {\sf X}-state with rank four should satisfy the phase identity:
$\theta_1+\theta_4=\theta_2+\theta_3$.
In this section, we show that this is also the case whenever the rank is five or six. We stress here that
a separable {\sf X}-state of rank six may not satisfy the identity $c_1c_4=c_2c_3$
even though it obeys the phase identity.

\begin{theorem}\label{main-rank-6}
Let $\varrho=X(a,b,c)$ be a three-qubit non-diagonal {\sf X}-state of $\rk\varrho\le 6$.
Then $\varrho$ is separable if and only if the following hold:
\begin{enumerate}
\item[(i)]
the relation {\rm (\ref{ppt-diag})} holds;
\item[(ii)]
there exists a partition $\{i_1,i_2\}\sqcup\{i_3,i_4\}=\{1,2,3,4\}$ such that
$$
\sqrt{a_i b_i} = R_\varrho = |c_i|  \ (i=i_1, i_2),\qquad \sqrt{a_j b_j} \ge R_\varrho \ge r_\varrho=|c_j|,  \ (j=i_3,i_4);
$$
\item[(iii)]
if $r_\varrho>0$, then the phase identity holds.
\end{enumerate}
\end{theorem}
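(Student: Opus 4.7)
The plan is to prove necessity and sufficiency separately, exploiting the decomposition of $X(a,b,c)$ as a direct sum of four $2\times 2$ blocks at index pairs $\{i,9-i\}$ for $i=1,2,3,4$, whose ranks lie in $\{0,1,2\}$ and sum to $\rk\varrho$.

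For \emph{necessity}, (i) is immediate from (\ref{ppt-diag}). Because $\varrho$ is non-diagonal we have $R_\varrho>0$, so (i) rules out rank-zero blocks, and $\rk\varrho\le 6$ forces at least two blocks to have rank one. A rank-one block at index $i$ satisfies $a_ib_i=|c_i|^2$, which combined with (i) pins down $\sqrt{a_ib_i}=|c_i|=R_\varrho$. Label two such indices as $i_1,i_2$ and the remainder as $i_3,i_4$. To establish (ii) and (iii) simultaneously, write $\varrho=\sum_k\lambda_k|\xi_k\rangle\langle\xi_k|$ with $|\xi_k\rangle=|x_k\rangle\otimes|y_k\rangle\otimes|z_k\rangle$ lying in the range of $\varrho$, and analyze each generic $|\xi_k\rangle$ (one with no zero coordinate). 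Setting $\theta_{x,k}=\arg(x_{k,0}\bar x_{k,1})$ and analogously $\theta_{y,k},\theta_{z,k}$, the quantity $|c_i^{(k)}|$ is independent of $i$ while $\arg c_i^{(k)}$ is a prescribed signed sum $\pm\theta_{x,k}\pm\theta_{y,k}\pm\theta_{z,k}$. The parallelism conditions at the two rank-one blocks (whose rank-one generators have both coordinates nonzero) fix two independent linear combinations of $\theta_{x,k},\theta_{y,k},\theta_{z,k}$ in terms of $\theta_{i_1},\theta_{i_2}$; a direct computation, which for the partition $\{1,4\}|\{2,3\}$ takes the explicit form $c_3^{(k)}=e^{{\rm i}(\theta_1+\theta_4)}\overline{c_2^{(k)}}$ and has analogues for the other two partitions, shows that $c_{i_4}^{(k)}$ is a fixed unimodular multiple of $c_{i_3}^{(k)}$ or of $\overline{c_{i_3}^{(k)}}$, with the multiplier depending only on $\theta_{i_1},\theta_{i_2}$. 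Product vectors with a zero coordinate satisfy $c_i^{(k)}=0$ for every $i$ and drop out of the sum, so summing the preceding identity over $k$ yields an exact relation between $c_{i_3}$ and $c_{i_4}$, forcing $|c_{i_3}|=|c_{i_4}|=r_\varrho$ (which is (ii)) and, whenever $r_\varrho>0$, the phase identity (iii).

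For \emph{sufficiency}, assume (i), (ii), (iii), and after permuting qubits if necessary suppose $\{i_1,i_2\}=\{1,4\}$. Put $R=R_\varrho$, $r=r_\varrho$, $m_1=(R+r)/2$, $m_2=(R-r)/2$, and define
\[
\sigma_j=X\Bigl(\tfrac{R+(-1)^{j-1}r}{2R}\,a,\ \tfrac{R+(-1)^{j-1}r}{2R}\,b,\ c^{(j)}\Bigr),\qquad j=1,2,
\]
with $c^{(1)}=m_1(e^{{\rm i}\theta_1},e^{{\rm i}\theta_2},e^{{\rm i}\theta_3},e^{{\rm i}\theta_4})$ and $c^{(2)}=m_2(e^{{\rm i}\theta_1},-e^{{\rm i}\theta_2},-e^{{\rm i}\theta_3},e^{{\rm i}\theta_4})$. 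A direct check shows $\sigma_1+\sigma_2=\varrho$. Each $\sigma_j$ has common anti-diagonal magnitude $m_j$, so $R_{\sigma_j}=r_{\sigma_j}=m_j$; the uniform scaling of the diagonals by $\tfrac{R+(-1)^{j-1}r}{2R}$ transfers (i) for $\varrho$ into $\Delta_{\sigma_j}\ge m_j$; and the phase identity together with the $\pi$-flips at indices $2,3$ produces $c_1^{(j)}c_4^{(j)}=c_2^{(j)}c_3^{(j)}$. Corollary \ref{suff-4} then yields separability of each $\sigma_j$, hence of $\varrho$. The other two pairings are handled by placing the sign flips at the appropriate index pair.

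The main obstacle is the phase bookkeeping inside necessity: one must verify that the two rank-one parallelism conditions constrain each product vector tightly enough that the arguments of $c_{i_3}^{(k)}$ and $c_{i_4}^{(k)}$ are linked by the \emph{same} unimodular scalar for every $k$, so that the sum $c_{i_4}=\sum_k\lambda_k c_{i_4}^{(k)}$ inherits this relation as an equality rather than only through an inequality. Once this is in hand, the sufficiency construction and the treatment of degenerate product vectors fit together routinely.
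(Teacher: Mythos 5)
Your argument is correct; its necessity half essentially reproduces the paper's reasoning, while its sufficiency half takes a genuinely different route. For necessity, the paper also isolates the two indices with $\sqrt{a_ib_i}=|c_i|=R_\varrho$, forces equality throughout the chain $|c_i|\le\sum_k\lambda_k|f_i^k|\le\sum_k\lambda_k\sqrt{d_i^ke_i^k}\le\sqrt{a_ib_i}$ to pin the phase of each product summand at those indices, and then invokes the product-vector relation $f_1^kf_4^k=f_2^kf_3^k$ from Theorem \ref{rank-four} to conclude $c_1c_4=c_2c_3$; your rank-one-block parallelism plus the additive identity $\arg c_1^{(k)}+\arg c_4^{(k)}=2\theta_{x,k}=\arg c_2^{(k)}+\arg c_3^{(k)}$ is the same mechanism, and the ``obstacle'' you flag is resolved exactly as you propose: the two saturated blocks fix the unimodular multiplier as a function of $\theta_{i_1},\theta_{i_2}$ alone, so it is the same for every $k$ and survives the real-weighted sum (conjugation included). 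For sufficiency the paper splits the \emph{phases}: it writes each $c_j$ ($j=i_3,i_4$) as the midpoint $\tfrac 12\bigl(R_\varrho e^{{\rm i}\alpha_j}+R_\varrho e^{{\rm i}\beta_j}\bigr)$ with $\alpha_j=\theta_j+\phi$, $\beta_j=\theta_j-\phi$, $\cos\phi=r_\varrho/R_\varrho$, builds two rank-four states plus a diagonal remainder, and reduces all partitions to $\{i_1,i_2\}=\{1,2\}$ by system swaps and local flips. You instead split the \emph{magnitudes}, writing $\varrho=\sigma_1+\sigma_2$ with common anti-diagonal magnitudes $(R_\varrho\pm r_\varrho)/2$, proportionally rescaled diagonals, and sign flips at $\{i_3,i_4\}$; since $\epsilon_1\epsilon_4=\epsilon_2\epsilon_3$ for every such placement, the hypothesis $c_1^{(j)}c_4^{(j)}=c_2^{(j)}c_3^{(j)}$ of Corollary \ref{suff-4} reduces to the phase identity uniformly in the partition, which spares you the paper's case analysis over pairings. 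The one point you should make explicit is the degenerate case $r_\varrho=0$: there $\theta_{i_3}$ and $\theta_{i_4}$ are undefined and (iii) is vacuous, so you must first \emph{choose} them to satisfy the phase identity before running your construction, exactly as the paper does.
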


\begin{proof}
By the rank condition and non-diagonality, we see that there exists $\{i_1,i_2\}$ such that
$\sqrt{a_ib_i}=|c_i|$ for $i=i_1,i_2$.
Suppose that $\varrho$ is separable. We first note that $\varrho$ satisfies the condition (\ref{ppt-diag}),
from which we see that the first condition of (ii) holds.
Now, we write
$\varrho = \sum_k \lambda_k \omega_k$
with pure product states $\omega_k$'s, where $\lambda_k>0$ and $\sum_k \lambda_k=1$.
Suppose that the {\sf X}-part of $\omega_k$ is given by $X(d^k,e^k,f^k)$.
Then we have the inequalities:
\begin{equation}\label{jgilugjkhkj}
\begin{aligned}
|c_i| = |\sum_k \lambda_k f^k_i| & \le \sum_k \lambda_k |f^k_i| \\
& \le \sum_k \lambda_k \sqrt{d^k_i e^k_i} \\
& \le (\sum_k \lambda_k d^k_i)^{1 \slash 2} (\sum_k \lambda_k e^k_i)^{1 \slash 2} = \sqrt{a_i b_i},
\end{aligned}
\end{equation}
for $i=1,2,3,4$.
Therefore, all the above inequalities must be identities for $i=i_1,i_2$.
By the first identity, we have $\arg c_i=\arg f^k_i$ for $i=i_1,i_2$ and all $k$ with $f_i^k \ne 0$.

We first consider the case $\{i_1,i_2\}=\{1,2\}$.
By Theorem \ref{rank-four}, we have $|f_i^k|=|f_j^k|$ and $f^k_1 f^k_4 = f^k_2 f^k_3$,
which implies $e^{{\rm i}\theta_1} f^k_4 = e^{{\rm i}\theta_2} f^k_3$.
Therefore, we have
$$
c_1 c_4 = |c_1|~ \sum_k \lambda_k e^{{\rm i} \theta_1} f^k_4
= |c_2|~ \sum_k \lambda_k e^{{\rm i} \theta_2} f^k_3 = c_2c_3.
$$
Especially, we have $|c_3|=|c_4|$ and the phase identity. This completes the proof of \lq only if\rq\ part,
for the case of $\{i_1,i_2\}=\{1,2\}$. For the converse, we will express
$c_i$ ($i=3,4$) by the average of two complex numbers with absolute value $R_\varrho$. To do this, define the phase $\phi$ by the relation
$\cos\phi=r_\varrho \slash R_\varrho$, and put
$\alpha_i = \phi + \theta_i$ and $\beta_i = -\phi +\theta_i$ for $i=3,4$. See FIGURE 6.
In the case $r_\varrho=0$, take arbitrary $\theta_3$ and $\theta_4$ satisfying the phase identity.
Then we have
${1 \over 2}(R_\varrho e^{{\rm i}\alpha_i}+R_\varrho e^{{\rm i}\beta_i})=c_i$ for $i=3,4$,
and
$\theta_1+\alpha_4=\theta_2+\alpha_3$ together with
$\theta_1+\beta_4=\theta_2+\beta_3$.
By condition (\ref{ppt-diag}) on the diagonal part,
we can take $\lambda$ so that
$$
{a_1 \over a_3}, {a_2 \over a_4} \le \lambda \le {b_4 \over b_2}, {b_3 \over b_1}.
$$
Now, we put $a^\prime=(a_1, a_2, a_1 / \lambda, a_2 / \lambda)$ and $b^\prime=(b_1, b_2, \lambda b_1, \lambda b_2)$.
We also put $c^\prime=(c_1, c_2, R_\varrho e^{{\rm i}\alpha_3}, R_\varrho e^{{\rm i}\alpha_4})$
and $c^{\prime\prime}=(c_1, c_2, R_\varrho e^{{\rm i}\beta_3}, R_\varrho e^{{\rm i}\beta_4})$.
Then we see that
$$
X(a,b,c)=\frac 12 X(a^\prime,b^\prime,c^\prime)+\frac 12 X(a^\prime,b^\prime,c^{\prime\prime}))+D
$$
with a diagonal state $D$. Here, the first two summands
are separable {\sf X}-states of rank four by Theorem \ref{rank-four}.

\setlength{\unitlength}{.8 mm}
\begin{figure}
\setlength{\unitlength}{.8 mm}
\begin{center}
\begin{picture}(50,50)
  \qbezier(25.000,0.000)(33.284,0.000)
          (39.142,5.858)
  \qbezier(39.142,5.858)(45.000,11.716)
          (45.000,20.000)
  \qbezier(45.000,20.000)(45.000,28.284)
          (39.142,34.142)
  \qbezier(39.142,34.142)(33.284,40.000)
          (25.000,40.000)
  \qbezier(25.000,40.000)(16.716,40.000)
          (10.858,34.142)
  \qbezier(10.858,34.142)( 5.000,28.284)
          ( 5.000,20.000)
  \qbezier( 5.000,20.000)( 5.000,11.716)
          (10.858,5.858)
  \qbezier(10.858,5.858)(16.716,0.000)
          (25.000,0.000)
\drawline(25,20)(35,37)
\put(35,37){\circle*{1}}
\drawline(25,20)(42,10)
\put(42,10){\circle*{1}}
\drawline(25,20)(38.5,23.5)
\put(38.5,23.5){\circle*{1}}
\drawline(35,37)(42,10)
\put(43,9){$R_\varrho e^{{\rm i}\beta_i}$}
\put(35,38){$R_\varrho e^{{\rm i}\alpha_i}$}
\put(39,24){$c_i$}
\put(28,22){$\phi$}
\put(29,18){$\phi$}
\end{picture}
\end{center}
\caption{$c_i$ is the midpoint of two points on the circle of radius $R_\varrho$.}
\end{figure}
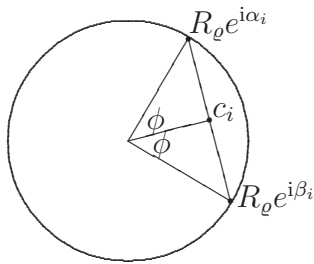

It remains to prove the other cases: $\{i_1,i_2\}$ is $\{1,3\}, \{1,4\}, \{2,3\}, \{2,4\}$ or  $\{3,4\}$.
To do this, we consider the operation on $M_2 \otimes M_2 \otimes M_2$ which interchanges the second and third subsystems. Then
this operation preserves separability and sends $X(a,b,c)$ to
$$
X((a_1,a_3,a_2,a_4),\ (b_1,b_3,b_2,b_4),\ (c_1,c_3,c_2,c_4)).
$$
If $\{i_1,i_2\}=\{1,3\}$ then we take this operation to get the above states. Applying the previous result with $\{i_1,i_2\}=\{1,2\}$
gives rise to the relations identical with those in (ii). For the case $\{i_1,i_2\}=\{1,4\}$, we may consider the operation
which interchange the first and third subsystems. This operation sends $X(a,b,c)$ to the state
$$
X((a_1,b_4,a_3,b_2),\ (b_1,a_4,b_3,a_2),\ (c_1,\bar c_4, c_3,\bar c_2))
$$
for which all the conditions do not change.
For the remaining cases, we use the operations which interchange
$|0\ran$ and $|1\ran$ in the second or third subsystem, and apply the results
when $\{i_1,i_2\}$ is $\{1,2\}$, $\{1,3\}$ and $\{1,4\}$.
This completes the proof.
\end{proof}

A separable {\sf X}-state of rank six may have zero anti-diagonal entries, as we see in the following example:
$$
X({\bf 1},{\bf 1},(1,1,0,0))={1 \over 2}\left[X({\bf 1},{\bf 1},{\bf 1}) + X({\bf 1},{\bf 1},(1,1,-1,-1))\right].
$$
In case that $\{i_1,i_2\}$ is one of $\{1,2\},\ \{3,4\},\ \{1,3\},\ \{2,4\}$, the conditions (ii) and (iii) in Theorem \ref{main-rank-6}
tells us that the relation $c_1c_4=c_2c_3$ actually holds. But, if $\{i_1,i_2\}=\{1,4\}$ or $\{2,3\}$, then this relation
may not hold for separable {\sf X}-states with rank six. Indeed, $\varrho= X({\bf 1},{\bf 1},(1,r,r,1))$ is separable
whenever $-1 \le r\le 1$, but the relation $c_1c_4=c_2c_3$ holds only when $r=\pm 1$.

If an {\sf X}-state $\varrho=X(a,b,c)$ is of rank five, then we have $R_\varrho=r_\varrho$ in Theorem \ref{main-rank-6}, and the relation
$\sqrt{a_ib_i}=r_\varrho$ holds for three of $i=1,2,3,4$. We denote these three by $i_1,i_2,i_3$.
If we take $d,e\in\mathbb R^4$ by $d_i=a_i, e_i=b_i$ for $i=i_1,i_2,i_3$ and take $d_{i_4}, e_{i_4}$
so that the relations (\ref{rank-4}) hold then $X(d,e,c)$ is a separable {\sf X}-state of rank four, and
$\varrho-X(d,e,c)$ is a diagonal state. Therefore, we see that a separable {\sf X}-state of rank five is the sum
of a separable {\sf X}-state of rank four and a diagonal state of rank one or two.
We may go further to show that $\varrho$ has a unique decomposition. To do this, we need the following:

\begin{lemma}\label{kyglhlkh}
Suppose that a separable {\sf X}-state $\varrho=X(a,b,c)$ has a decomposition $\varrho=\sum_k \lambda_k X(d^k,e^k,f^k)+\varrho_D$
into the sum of non-diagonal separable {\sf X}-states of rank four and a diagonal state.
Suppose that $\sqrt {a_{i_0}b_{i_0}}=|c_{i_0}|=|f^k_{i_0}|=1$ for some $i_0$ and all $k$. Then, we have
$\sum_k\lambda_k=1$, $d^k_{i_0}=a_{i_0}$, $e^k_{i_0}=b_{i_0}$ and $f^k_{i_0}=c_{i_0}$ for each $k$.
\end{lemma}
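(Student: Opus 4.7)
The plan is a straightforward chain-of-inequalities argument in the spirit of (\ref{jgilugjkhkj}): extract the scalar identities hidden in the matrix decomposition, chain four comparisons between $|c_{i_0}|$ and $\sqrt{a_{i_0}b_{i_0}}$, observe that both extremes equal $1$ by hypothesis, and read off the three stated conclusions from the three non-trivial equality cases.

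Concretely, I would first record what the decomposition $\varrho=\sum_k\lambda_k X(d^k,e^k,f^k)+\varrho_D$ says entry by entry. The $(i_0,9-i_0)$ entry gives $c_{i_0}=\sum_k \lambda_k f^k_{i_0}$, since the diagonal term $\varrho_D$ contributes nothing to the anti-diagonal, while the diagonal entries at positions $(i_0,i_0)$ and $(9-i_0,9-i_0)$ give $a_{i_0}=\sum_k \lambda_k d^k_{i_0}+\alpha$ and $b_{i_0}=\sum_k \lambda_k e^k_{i_0}+\beta$ for some $\alpha,\beta\ge 0$ coming from $\varrho_D$. By Theorem \ref{rank-four} applied to each rank-four summand, $\sqrt{d^k_{i_0}e^k_{i_0}}=|f^k_{i_0}|$, which by hypothesis equals $1$.

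With these in hand I would assemble the chain
$$
1=|c_{i_0}|\le \sum_k\lambda_k|f^k_{i_0}|=\sum_k\lambda_k\sqrt{d^k_{i_0}e^k_{i_0}}\le \bigl(\sum_k\lambda_k d^k_{i_0}\bigr)^{1/2}\bigl(\sum_k\lambda_k e^k_{i_0}\bigr)^{1/2}\le \sqrt{a_{i_0}b_{i_0}}=1,
$$
using in order the triangle inequality, the identity just recorded, Cauchy--Schwarz with weights $\lambda_k$, and $\alpha,\beta\ge 0$. Every inequality must therefore collapse to an equality. The triangle-inequality equality combined with $|f^k_{i_0}|=1$ forces every $f^k_{i_0}$ to share the argument of $c_{i_0}$ and hence to equal $c_{i_0}$, while simultaneously yielding $\sum_k\lambda_k=1$. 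The Cauchy--Schwarz equality combined with the constraint $d^k_{i_0}e^k_{i_0}=1$ forces $d^k_{i_0}$ to be constant in $k$, and the final inequality collapsing gives $\alpha=\beta=0$ so that this common value equals $a_{i_0}$; hence $d^k_{i_0}=a_{i_0}$ and $e^k_{i_0}=b_{i_0}$ for every $k$.

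I do not expect a conceptual obstacle here; this is essentially the equality analysis of the chain already appearing in (\ref{jgilugjkhkj}), with the extra input from Theorem \ref{rank-four} that $|f^k_{i_0}|=\sqrt{d^k_{i_0}e^k_{i_0}}$ on each rank-four piece, which is what lets the chain close at both ends with $1$. The only care needed is the bookkeeping of which equality case delivers which of the three conclusions, and writing the Cauchy--Schwarz step in the weighted form $\sum_k \sqrt{\lambda_k d^k_{i_0}}\cdot\sqrt{\lambda_k e^k_{i_0}}\le \bigl(\sum_k\lambda_k d^k_{i_0}\bigr)^{1/2}\bigl(\sum_k\lambda_k e^k_{i_0}\bigr)^{1/2}$ so that its equality case correctly translates to constancy of $d^k_{i_0}$.
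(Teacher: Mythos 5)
Your proposal is correct and follows essentially the same route as the paper: both run the equality analysis of the chain (\ref{jgilugjkhkj}), using the triangle-inequality equality to get $f^k_{i_0}=c_{i_0}$ and $\sum_k\lambda_k=1$, the Cauchy--Schwarz equality together with $d^k_{i_0}e^k_{i_0}=|f^k_{i_0}|^2=1$ (from Theorem \ref{rank-four}) to make $d^k_{i_0}$ constant in $k$, and the collapse of the final inequality to identify that constant with $a_{i_0}$. The only cosmetic difference is that you make the diagonal contributions $\alpha,\beta\ge 0$ explicit, whereas the paper bounds $\sum_k\lambda_k d^k_{i_0}\le a_{i_0}$ and $\sum_k\lambda_k e^k_{i_0}\le b_{i_0}$ directly.
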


\begin{proof}
We again consider the inequalities (\ref{jgilugjkhkj}), where all the equalities must be identities for $i=i_0$.
By the first identity, we have $f^k_{i_0}=c_{i_0}$ for all $k$. Since $\sum_k \lambda_k f_{i_0}^k = c_{i_0}$,
we have $\sum_k \lambda_k =1$. By the third identity, there exists
$\mu >0$ such that $d^k_{i_0}=\mu e^k_{i_0} = \mu / d^k_{i_0}$ for each $k$, and so, we have $d^k_{i_0}=\sqrt\mu$ for each $k$.
From
$$
1=\sum_k\lambda_k |f^k_{i_0}|^2 = \sum_k\lambda_k d^k_{i_0} e^k_{i_0} =\sqrt\mu \sum_k \lambda_k e^k_{i_0} \le \sqrt\mu b_{i_0}
$$
and
$$
\sqrt{\mu} = \sum_k \lambda_k d_{i_0}^k \le a_{i_0},
$$
we obtain $a_{i_0}=\sqrt{\mu}=d^k_{i_0}$ for each $k$. Similarly, we have $b_{i_0} = e^k_{i_0}$.
\end{proof}

\begin{theorem}
Every non-diagonal separable {\sf X}-state of rank five has a unique decomposition into the sum of a non-diagonal separable {\sf X}-state
of rank four and a diagonal state of rank one or two.
Furthermore, the decomposition into the convex sum of pure product states is also unique.
\end{theorem}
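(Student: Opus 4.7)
The plan is to prove the two uniqueness statements in turn, both of which rely on the rigidity imposed by the fact that three of the four $2 \times 2$ blocks of $\varrho$ have rank one. From the paragraph preceding the theorem together with Theorem~\ref{main-rank-6}, there is a partition of $\{1,2,3,4\}$ into three ``good'' indices where $\sqrt{a_j b_j}=|c_j|=R_\varrho$ and one ``bad'' index. By permuting subsystems as in the proof of Theorem~\ref{main-rank-6}, I may assume the good indices are $\{1,2,3\}$ and the bad index is $4$.

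For uniqueness of the rank-four-plus-diagonal decomposition, suppose $\varrho=X(d,e,c')+D=X(d',e',c'')+D'$, where both rank-four pieces are non-diagonal separable {\sf X}-states and $D,D'$ are diagonal of rank at most two. Since diagonal states carry no anti-diagonal, $c=c'=c''$. The PSD conditions on $D,D'$ give $d_i,d'_i\le a_i$ and $e_i,e'_i\le b_i$, while Theorem~\ref{rank-four}(ii) forces $\sqrt{d_j e_j}=\sqrt{d'_j e'_j}=R_\varrho$ for every $j$. At each good index this combined with $\sqrt{a_j b_j}=R_\varrho$ pins down $d_j=d'_j=a_j$ and $e_j=e'_j=b_j$; at the bad index the multiplicative relation $a_1 a_4=a_2 a_3$ from Theorem~\ref{rank-four}(ii) then yields $d_4=d'_4=a_2a_3/a_1$ and $e_4=e'_4=b_2b_3/b_1$, so $X(d,e,c)=X(d',e',c)$ and $D=D'$.

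For uniqueness of the decomposition into pure product states, let $\varrho=\sum_i\mu_i|\xi_i\rangle\langle\xi_i|$ be any such decomposition and partition the indices into $A$ (those with $|\xi_i\rangle$ having no zero coordinate) and $B$ (the rest). Taking the {\sf X}-part of both sides and using $\varrho=\varrho_X$ yields
\[
\varrho=\varrho_1+\varrho_2,\qquad \varrho_1:=\sum_{i\in A}\mu_i(|\xi_i\rangle\langle\xi_i|)_X,\qquad \varrho_2:=\sum_{i\in B}\mu_i(|\xi_i\rangle\langle\xi_i|)_X,
\]
where each summand of $\varrho_1$ is a non-diagonal rank-four separable {\sf X}-state and $\varrho_2$ is diagonal. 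At each good index $j$, the $2\times 2$ block of $\varrho$ has rank one, so in the PSD decomposition $\varrho=\varrho_1+\varrho_2$ both blocks are proportional to it; since $\varrho_2$'s block is diagonal while the rank-one block has nonzero off-diagonal entry $c_j$, the $j$-th block of $\varrho_2$ must vanish. Hence $\varrho_2$ is supported on the two positions $4$ and $5$. Applying Theorem~\ref{rank-four}(ii) to each summand $\omega_i=(|\xi_i\rangle\langle\xi_i|)_X$ of $\varrho_1$, combined with the Cauchy-Schwarz and triangle-inequality saturations forced at the good indices, one deduces that $d^i_j/e^i_j=a_j/b_j$ and $\arg f^i_j=\arg c_j$ for $i\in A$ and $j\in\{1,2,3\}$, and then the multiplicative identity $d^i_1 d^i_4=d^i_2 d^i_3$ for each $\omega_i$ gives $\sum_{i\in A}\mu_i d^i_4=a_2 a_3/a_1$ and similarly for $e^i_4$ and $f^i_4$. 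Therefore $\varrho_1=X(d,e,c)$ and $\varrho_2=D$ by the uniqueness established above.

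Finally, Proposition~\ref{xpart-1} gives that $X(d,e,c)$ has a unique convex decomposition into pure product states, namely $\tfrac14\sum_k|\xi(k)\rangle\langle\xi(k)|$ from Theorem~\ref{rank-four}, forcing each $|\xi_i\rangle$ with $i\in A$ to coincide (up to scalar) with one of the four vectors $|\xi(k)\rangle$. The diagonal part $D$ is supported on the two-dimensional span of $|e_4\rangle$ and $|e_5\rangle$, whose only product vectors are these two GHZ-type basis vectors themselves; so each $|\xi_i\rangle$ with $i\in B$ must be one of those two, with weights determined by the nonzero diagonal entries of $D$. The main obstacle is the identification $\varrho_1=X(d,e,c)$: extending the rigid matching from the three rank-one blocks to the bad index requires combining the multiplicative identity satisfied by each rank-four summand with the Cauchy-Schwarz saturation derived at the good indices, rather than a purely local block-by-block argument.
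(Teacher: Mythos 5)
Your proposal is correct and follows essentially the same route as the paper: saturate the triangle/Cauchy--Schwarz chain at the three rank-one blocks to pin the summands down there, use the rank-four multiplicative relations of Theorem \ref{rank-four} to determine the remaining block, and then invoke the uniqueness of Proposition \ref{xpart-1} plus the fact that the two-dimensional diagonal support contains only two product vectors. The only difference is organizational: the paper isolates the saturation argument as Lemma \ref{kyglhlkh} (normalizing each rank-four summand to unit anti-diagonal magnitude), whereas you fold the same computation inline and give a slightly more direct block-by-block argument for the first uniqueness claim.
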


\begin{proof}
Suppose that $\varrho=X(a,b,c)$ is a separable {\sf X}-state of rank five, and
$\sqrt{a_ib_i}=R_\varrho=r_\varrho$ holds for $i=i_1,i_2,i_3$.
Let $\varrho=\sum_k \lambda_k X(d^k,e^k,f^k)+D$ be a decomposition of $\varrho$ into the
sum of non-diagonal separable {\sf X}-states of rank four and a diagonal state.
We may assume that $|c_i|=|f^k_i|=1$ for all $i,k$. Then
we see that $\sum_k \lambda_k X(d^k,e^k,f^k)$ is a single non-diagonal separable {\sf X}-state of rank four
by Lemma \ref{kyglhlkh} and Theorem \ref{rank-four}. Let $\varrho = X_1 + D_1 = X_2 + D_2$ for non-diagonal separable
{\sf X}-states $X_1, X_2$ of rank four and diagonal states $D_1, D_2$. Since $\varrho$ is of rank five,
all entries of diagonal states $D_1, D_2$ are zero except $(i_0,i_0)$ and $(9-i_0,9-i_0)$ positions for
some $i_0$. Hence, $X_1$ and $X_2$ coincide except $(i_0,i_0)$ and $(9-i_0,9-i_0)$ positions.
By (\ref{rank-4}), entries on those positions determined by other diagonal entries.
It follows that $X_1=X_2$ and $D_1=D_2$.

Suppose that $\varrho = \sum_k \lambda_k |\xi_k \ran\lan \xi_k| + \sum_\ell \mu_\ell |\eta_\ell \ran\lan \eta_\ell|$
is a decomposition into pure product states, where every entry of $|\xi_k\ran$ is
nonzero and some entry of $\eta_\ell$ is zero. Taking their {\sf X}-parts,
we get $\varrho = \sum_k \lambda_k X(d^k, e^k, f^k) +D$,
where $X(d^k, e^k, f^k)$ and $D$ are {\sf X}-parts of $|\xi_k \ran\lan \xi_k|$
and $\sum_\ell \mu_\ell |\eta_\ell \ran\lan \eta_\ell|$, respectively.
Since $X(d^k, e^k, f^k)$ is non-diagonal of rank four and $D$ is diagonal,
all $X(d^k, e^k, f^k)$ coincide up to scalar by Lemma \ref{kyglhlkh} and (\ref{rank-4}).
By Theorem \ref{rank-four}, we conclude that each $|\xi_k\ran$ must be one of four vectors
in (\ref{decom-rank1}) for a fixed $|\xi\ran$.
Now, we look at the diagonal part $D=\sum_\ell \mu_\ell |\eta_\ell \ran\lan \eta_\ell|$.
Since $\varrho$ is of rank five, all entries of a diagonal state $D$ are zero
except $(i_0,i_0)$ and $(9-i_0,9-i_0)$ positions for some $i_0$.
This shows that $|\eta_\ell\ran$ must be either $|i\ran |j\ran |k\ran$ or $|\bar i\ran |\bar j\ran |\bar k\ran$
for $i,j,k=0,1$, where $\bar i$ is given by the relation $i+\bar i=1\mod 2$.
This completes the proof of uniqueness of decomposition.
\end{proof}

Now, we have seen that a separable {\sf X}-state $\varrho$ of rank five is located in the simplcial face determined by four product vectors
in (\ref{decom-rank1}) together with $|i\ran |j\ran |k\ran$ and $|\bar i\ran |\bar j\ran |\bar k\ran$.
These six product vectors span a five dimensional space, and determine a $5$-simplex.
If we take five of them then they also span the five dimensional space.
Therefore, they makes a $4$-simplex whose interior points
are still of rank five. This $4$-simplex has an {\sf X}-state if and only if the five choice includes all of product vectors in
(\ref{decom-rank1}).
In short, a separable {\sf X}-state of rank five is decomposed into the sum of six or five pure product states, which include
four pure product states arising from (\ref{decom-rank1}).
Compare with Theorem 4.4 in \cite{ha-kye-multi-unique}.

For a three-qubit state $\varrho$, we consider the number $P_\varrho$ defined by
$$
P_\varrho= \rk\varrho +\rk \varrho^{\Gamma_A} + \rk \varrho^{\Gamma_B}+\rk \varrho^{\Gamma_C}.
$$
It was observed in \cite{abls} and proved in \cite{kiem-kye-na} that
if $P_\varrho\le 28$ then generically there exists no product vector
in the range of $\varrho$ satisfying the range criterion \cite{horo_range}.
In short, a PPT state $\varrho$ with $P_\varrho\le 28$ is entangled with the probability one.
We see that if $\rk\varrho\le 6$ then $P_\varrho\le 28$.
In this contexts, it is not so surprising that our separability characterization
involves an identity; the phase identity.
If $\varrho$ itself and all the partial transposes of $\varrho$ have rank seven
then we still have $P_\varrho = 28$, and so one may suspect if
the phase identity is still necessary for separability.
This is not the case, as we see in the following example:
$$
\begin{aligned}
\varrho &=X({\bf 1},{\bf 1},(1,\textstyle{1 \over 3},\textstyle{{\rm i} \over 3},\textstyle{2-{\rm i} \over 3}))\\
&={1 \over 3}\left[X({\bf 1},{\bf 1},(1,1,1,1)) + X({\bf 1},{\bf 1},(1,{\rm i},-1,-{\rm i})) + X({\bf 1},{\bf 1},(1,-{\rm i},{\rm i},1))\right].
\end{aligned}
$$
This is a separable state, and all the partial transposes have rank seven as well as $\varrho$ itself.
But, $\varrho$ does not satisfy the phase identity: $\theta_2+\theta_3 = \pi \slash 2 \ne \theta_1+\theta_4$.

\section{optimal decompositions}

A decomposition of a separable state $\varrho$ into the sum of pure product states is said to be optimal when the number of
pure product states is minimal. This minimal number is also called the length $\ell(\varrho)$ of $\varrho$.
The length of an $m\otimes n$ bi-partite separable state does not exceed $(mn)^2$ \cite{horo_range}.
It was shown in \cite{DiVin} that the length may be strictly greater than the rank of the state.
Chen and Djokovi\'c \cite{chen_dj_2xd} showed that the length of a $2\otimes 3$ separable state $\varrho$
is equal to the maximum of $\rk\varrho$ and $\rk\varrho^\Gamma$. They also showed
\cite{chen_dj_semialg} that the length of an $m\otimes n$ separable state
may exceed $mn$ when $(m-2)(n-2)>1$. Later, $3\otimes 3$ and $2\otimes 4$ separable states with length $10$
have been constructed in \cite{{ha-kye-sep_2x4},{ha-kye-sep-face}}.
In the previous section, we have shown that a three-qubit separable {\sf X}-state of rank $5$ has length $5$ or $6$.
We show in this section that the lengths $\ell (\varrho)$ of three-qubit separable {\sf X}-states $\varrho$ of rank six are
equal to the maximum
$$
\Gamma(\varrho)= \max\{\rk\varrho,\ \rk \varrho^{\Gamma_A},\ \rk \varrho^{\Gamma_B},\ \rk \varrho^{\Gamma_C}\}
$$
of the ranks, except for the case
\begin{equation}\label{exceptional}
\Gamma(\varrho)=7, \qquad a_1 a_4 \ne a_2 a_3,\qquad b_1 b_4 \ne b_2 b_3,
\end{equation}
for which we have $\ell(\varrho)=8$. Note that $\ell(\varrho)\ge \Gamma(\varrho)$ in general.
In the course of discussion, we get an alternative proof for Theorem \ref{main-rank-6}.

From now on, we suppose that $\varrho=X(a,b,c)$ is a three-qubit non-diagonal separable {\sf X}-state with rank $\le 6$.
By the PPT condition $\sqrt{a_ib_i}\ge R_\varrho$, we see that
$\sqrt{a_ib_i}=R_\varrho=|c_i|$ for $i\in\{i_1,i_2\}$. Without loss of generality, we
may assume that $R_\varrho=1$. We consider the case of $\{i_1,i_2\}=\{1,2\}$.
Note that $\varrho$ has two kernel vectors;
$$
(c_1,0,0,0,0,0,0,0,-a_1)^\ttt,\qquad
(0,c_2,0,0,0,0,-a_2,0)^\ttt.
$$
If a product vector $|x\ran\ot |y\ran\ot |z\ran\in\mathbb C^2\otimes\mathbb C^2\otimes\mathbb C^2$
belongs to the range of $\varrho$,
then we have the equations
\begin{equation}\label{kye_eq_1}
\bar c_1x_0y_0z_0=a_1x_1y_1z_1,\qquad
\bar c_2 x_0y_0z_1=a_2x_1y_1z_0,
\end{equation}
where $|x\ran=(x_0,x_1)^\ttt$, $|y\ran=(y_0,y_1)^\ttt$ and $|z\ran=(z_0,z_1)^\ttt$.
It is routine to check that a solution with a zero entry is one of the following:
\begin{equation}\label{zero-entry}
\begin{aligned}
|\xi_x\ran :=(x_0,x_1)^\ttt\otimes (1,0)^\ttt \otimes (0,1)^\ttt &=(0,x_0,0,0,0,x_1,0,0)^\ttt,\\
|\eta_x\ran :=(x_0,x_1)^\ttt\otimes (0,1)^\ttt \otimes (1,0)^\ttt &=(0,0,x_0,0,0,0,x_1,0)^\ttt,\\
|\xi_y\ran :=(1,0)^\ttt\otimes (y_0,y_1)^\ttt \otimes (0,1)^\ttt &=(0,y_0,0,y_1,0,0,0,0)^\ttt,\\
|\eta_y\ran :=(0,1)^\ttt\otimes (y_0,y_1)^\ttt \otimes (1,0)^\ttt &=(0,0,0,0,y_0,0,y_1,0)^\ttt,\\
|\xi_z\ran :=(1,0)^\ttt\otimes (0,1)^\ttt \otimes (z_0,z_1)^\ttt &=(0,0, z_0,z_1,0,0,0,0)^\ttt,\\
|\eta_z\ran :=(0,1)^\ttt\otimes (1,0)^\ttt \otimes (z_0,z_1)^\ttt &=(0,0,0,0, z_0,z_1,0,0)^\ttt.
\end{aligned}
\end{equation}
Now, we proceed to look for solutions which have no zero entries. In this case, we may put $x_0=y_0=z_0=1$,
and see that the solution is of the form
$$
x_1=b_1^\ha c_1^{-\ha}\alpha^{-1},\quad y_1=\pm b_2^\ha c_2^{-\ha}\alpha,\quad
z_1 =\pm b_1^{\ha}a_2^\ha c_1^{-\ha}c_2^\ha,
$$
with a nonzero complex number $\alpha$. Therefore, the solutions are given by
$$
\begin{aligned}
|\xi_\alpha^\pm\ran
:=&(1,b_1^\ha c_1^{-\ha}\alpha^{-1})^\ttt
   \otimes (1,\pm b_2^\ha c_2^{-\ha}\alpha)^\ttt
   \otimes (1,\pm b_1^{\ha}a_2^\ha c_1^{-\ha}c_2^\ha)^\ttt\\
=&(\alpha c_1^\ha,b_1^\ha)^\ttt \otimes (\alpha^{-1}c_2^\ha, \pm b_2^\ha)^\ttt
    \otimes (a_1^\ha c_1^\ha, \pm a_2^\ha c_2^\ha)^\ttt\\
=&(a_1^\ha c_1c_2^\ha,
   \pm a_2^\ha c_1^\ha c_2,
   \pm \alpha a_1^\ha b_2^\ha c_1,
   \alpha c_1^\ha c_2^\ha,
   \alpha^{-1} c_1^\ha c_2^\ha,
   \pm \alpha^{-1} a_2^\ha b_1^\ha c_2,
   \pm b_2^\ha c_1^\ha,
   b_1^\ha c_2^\ha)^\ttt,
\end{aligned}
$$
up to scalar multiplication.

Hence, the only possible decomposition of $\varrho$ into pure product states is of the form
$\varrho=\varrho_1+\varrho_0$, where
$$
\varrho_1
=
\sum_{i\in I} p_i|\xi_{\alpha_i}^+\ran\lan\xi_{\alpha_i}^+|
+
\sum_{j\in J} q_j|\xi_{\beta_j}^-\ran\lan\xi_{\beta_j}^-|
$$
and $\varrho_0$ is a separable state which is decomposed into the sum of pure states in (\ref{zero-entry}).
We compare $(1,j)$-entries and $(4,5),(4,6)$-entries of $\varrho$ and $\varrho_1+\varrho_0$ to get the relations
\begin{equation}\label{kye_eq1}
\sum_{i\in I} p_i=\sum_{j\in J}q_j=\ha,\qquad
\sum_{i\in I}p_i\alpha_i
=\sum_{i\in I}p_i\alpha_i^{-1}
=\sum_{j\in J}q_j\beta_j
=\sum_{j\in J}q_j\beta_j^{-1}=0
\end{equation}
and
\begin{equation}\label{kye_eq2}
\sum_{i\in I}p_i\alpha_i\bar\alpha_i^{-1}=\sum_{j\in J}q_j\beta_j\bar\beta_j^{-1}=\ha c_4,
\end{equation}
respectively.
These conditions imply that
$$
\varrho_1=
\left(\begin{matrix}
a_1 &\cdot&\cdot&\cdot&\cdot&\cdot&\cdot  &c_1\\
\cdot&a_2 &\cdot&\cdot&\cdot&\cdot &c_2 &\cdot \\
\cdot&\cdot& a_1b_2A_+ &a_1^\ha b_2^\ha c_1^\ha \bar c_2^\ha A_- &\cdot &c_1\bar c_2 c_4 &\cdot&\cdot \\
\cdot&\cdot&a_1^\ha b_2^\ha \bar c_1^\ha c_2^\ha A_-& A_+ &c_4 &\cdot&\cdot&\cdot \\
\cdot&\cdot&\cdot& \bar c_4 &B_+ &a_2^\ha b_1^\ha c_1^\ha \bar c_2^\ha B_- &\cdot&\cdot \\
\cdot&\cdot& \bar c_1 c_2\bar c_4 &\cdot&a_2^\ha b_1^\ha \bar c_1^\ha c_2^\ha B_- &a_2b_1B_+ &\cdot&\cdot\\
\cdot&\bar c_2 &\cdot&\cdot&\cdot&\cdot& b_2 &\cdot \\
\bar c_1 &\cdot&\cdot&\cdot&\cdot&\cdot&\cdot& b_1
\end{matrix}\right),
$$
where
\begin{equation}\label{exppr_AB}
A_\pm =\sum_{i\in I} p_i|\alpha_i|^2\pm \sum_{j\in J} q_j|\beta_j|^2,\qquad
B_\pm =\sum_{i\in I} p_i|\alpha_i|^{-2}\pm \sum_{j\in J} q_j|\beta_j|^{-2}.
\end{equation}
From this, we see that the numbers $A_+$ and $B_+$ satisfy the relations
\begin{equation}\label{decom-AB}
a_1b_2A_+ \le a_3,\qquad A_+ \le a_4,\qquad a_2b_1B_+ \le b_3,\qquad B_+ \le b_4,
\end{equation}
and $1,2,7,8$-th rows and columns of $\varrho_0$ are zero.

In the above discussion, we actually have shown that $c_1\bar c_2c_4=c_3$, or equivalently $c_1c_4=c_2c_3$. Furthermore, we see that
$$
a_1b_2b_3a_4\ge a_1b_2(a_2b_1B_+)A_+=A_+B_+\ge 1,
$$
where the last inequality follows from the Cauchy-Schwarz inequality.
We also have $b_1a_2a_3b_4\ge 1$ similarly. Therefore, we have an alternative proof for the necessity part of
Theorem \ref{main-rank-6}.

\begin{theorem}
Suppose that $\varrho$ is a three-qubit non-diagonal separable {\sf X}-state of rank $6$.
We have $\Gamma(\varrho)=\ell(\varrho)$ except for the case of {\rm (\ref{exceptional})}.
When {\rm (\ref{exceptional})} holds, we have $\ell(\varrho)=8$.
\end{theorem}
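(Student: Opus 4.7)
The plan is to continue the structural analysis leading up to the theorem, in which every pure product state decomposition of $\varrho$ is shown to split as $\varrho=\varrho_1+\varrho_0$ with $\varrho_1=\sum_{i\in I} p_i|\xi^+_{\alpha_i}\ran\lan\xi^+_{\alpha_i}|+\sum_{j\in J} q_j|\xi^-_{\beta_j}\ran\lan\xi^-_{\beta_j}|$, and $\varrho_0$ a nonnegative diagonal matrix whose only possibly nonzero entries sit at positions $3,4,5,6$, namely $f_1=a_3-a_1b_2A_+$, $f_2=a_4-A_+$, $f_3=b_4-B_+$, $f_4=b_3-a_2b_1B_+$. The total number of product states used is then $|I|+|J|+\rk\varrho_0$. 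The universal lower bound $\ell(\varrho)\ge\Gamma(\varrho)$ is standard, since the partial transpose of a pure product state remains pure product, so $\rk\varrho^{\Gamma_X}\le\ell(\varrho)$ for each $X\in\{A,B,C\}$.

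I would first compute $\Gamma(\varrho)$ block by block in the case $\{i_1,i_2\}=\{1,2\}$ with $R_\varrho=1$ (the other partitions following by the subsystem-swap symmetries used in the proof of Theorem \ref{main-rank-6}). One finds $\rk\varrho^{\Gamma_C}=6$ always; when $r_\varrho<1$, $\rk\varrho^{\Gamma_A}=\rk\varrho^{\Gamma_B}=4+r_3+r_4$ with $r_i=1$ if $a_ib_i=1$ and $r_i=2$ otherwise; and when $r_\varrho=1$ a direct check gives $\Gamma=6$. Thus $\Gamma(\varrho)\in\{6,7,8\}$: $\Gamma=6$ iff $a_3b_3=a_4b_4=1$ (which then forces $a_1a_4=a_2a_3$ by combining the bounds $\sqrt[4]{a_1b_2b_3a_4}\ge 1$ and $\sqrt[4]{b_1a_2a_3b_4}\ge 1$ coming from (\ref{ppt-diag})); $\Gamma=7$ iff exactly one of $a_3b_3,a_4b_4$ equals $1$; and $\Gamma=8$ otherwise.

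For the upper bound, I would choose $A_+,B_+$ to minimize $\rk\varrho_0$ and realize the prescribed moments by an explicit $(p_i,\alpha_i)$ with $|I|=|J|=3$ (or $|I|=|J|=2$ when $r_\varrho=1$). Setting $A_+=a_4$ (so $f_2=0$) and---when $a_4b_4=1$---also $B_+=1/a_4=b_4$ (so $f_3=0$), the remaining conditions $f_1=0$ and $f_4=0$ are equivalent to $a_1a_4=a_2a_3$ and $b_1b_4=b_2b_3$ respectively. In every non-exceptional case at least one of these equalities holds, so $\rk\varrho_0\le\Gamma(\varrho)-6$ and hence $\ell(\varrho)\le\Gamma(\varrho)$; in the exceptional case neither holds, so $\rk\varrho_0=2$ and $\ell(\varrho)\le 8$.

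The crux is the matching lower bound $\ell(\varrho)\ge 8$ in the exceptional case. Suppose for contradiction that $\ell(\varrho)=7$. Since $r_\varrho<1$ forces $|I|,|J|\ge 3$ (a two-element $I$ entails $\alpha_2=-\alpha_1$, hence $|\sum_i p_i\alpha_i/\bar\alpha_i|=1/2$ and thus $|c_4|=1$), we have $|I|+|J|\in\{6,7\}$ and therefore $\rk\varrho_0\le 1$. If $\rk\varrho_0=0$ then all four $f_i=0$, and $f_1=f_2=0$ alone gives $a_1a_4=a_2a_3$, a contradiction. If $\rk\varrho_0=1$ then exactly three of the $f_i$ vanish; since the omitted index lies either in $\{1,2\}$ or in $\{3,4\}$, the complementary pair both vanish and force either $a_1a_4=a_2a_3$ or $b_1b_4=b_2b_3$, again a contradiction, so $\ell(\varrho)=8$. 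The main obstacle is verifying feasibility of the moment system on $(p_i,\alpha_i)$ for the required $|I|,|J|$ and chosen $A_+,B_+$---especially when the Cauchy--Schwarz bound $A_+B_+\ge 1$ must be saturated, forcing all $|\alpha_i|$ to coincide---which one ultimately guarantees from the separability of $\varrho$.
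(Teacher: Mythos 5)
Your overall skeleton is the same as the paper's: split any decomposition as $\varrho=\varrho_1+\varrho_0$, count product states as $|I|+|J|+(\text{states in }\varrho_0)$, and in the exceptional case derive $\ell(\varrho)\ge 8$ from $|I|,|J|\ge 3$ together with the fact that a single remaining product state lives entirely in the $\{3,4\}$ block or entirely in the $\{5,6\}$ block, forcing $a_1a_4=a_2a_3$ or $b_1b_4=b_2b_3$. That lower-bound argument is essentially the paper's and is sound (with the small caveat that a rank-one block of $\varrho_0$ need not be diagonal; what you really use is that positivity of $\varrho_0$ plus vanishing diagonal entries kills the whole block).

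The genuine gap is in the upper bound, and it stems from your premise that $\varrho_0$ is \emph{diagonal}, i.e.\ that one may always arrange $A_-=B_-=0$. Three concrete problems follow. (a) Your claim that ``in every non-exceptional case at least one of $a_1a_4=a_2a_3$, $b_1b_4=b_2b_3$ holds'' is false when $\Gamma(\varrho)=8$: there $a_3b_3>1$ and $a_4b_4>1$ and neither relation is implied, yet you need $\rk\varrho_0\le 2$, i.e.\ one vanishing $f_i$ in each block. Forcing $f_2=0$ and $f_3=0$ (say) pins $A_+=a_4$, $B_+=b_4$ with $A_+B_+>1$, and then the Cauchy--Schwarz bound is \emph{not} saturated, so the constant-modulus three-point moment construction is unavailable; you offer no replacement, and as stated your scheme only yields $\ell\le 9$ here. (b) Even where a relation does hold, the choice $A_+=a_4$ can violate $f_1\ge 0$: the PPT inequality $a_1b_2b_3a_4\ge 1$ gives $a_1a_4\ge a_2a_3$ when $a_3b_3=1$, i.e.\ $a_1b_2a_4\ge a_3$ --- the wrong direction --- so one must take $A_+=\min\{a_4,\,a_2a_3/a_1\}$. (c) The feasibility of the moment system (\ref{kye_eq1})--(\ref{kye_eq2}) with prescribed $A_\pm,B_\pm$ is exactly the point you defer, and it is not ``guaranteed from separability'' in the regime $A_+B_+>1$ with $A_-=B_-=0$. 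The paper sidesteps all of this by \emph{not} requiring $A_-=B_-=0$: it takes $|\alpha_i|^2=r$ and $|\beta_j|^2=s$ with $r\ne s$, so $\varrho_1$ acquires off-diagonal $(3,4)$ and $(5,6)$ entries, and the remainder on each block is cancelled by a single rank-one product state $|\xi_z\ran\lan\xi_z|$ or $|\eta_{z'}\ran\lan\eta_{z'}|$ with both components nonzero. The requirement that each remainder block be rank one is the pair of equations (\ref{kkkk}), linear in $x=r+s$, $y=rs$, and solvable in positive reals because the discriminant $x^2-4y$ is a product of the separability quantities $a_3b_3-1$, $a_4b_4-1$, $a_1b_2b_3a_4-1$, $b_1a_2a_3b_4-1$. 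This yields $\ell\le 6+1+1=8$ uniformly, degenerating to $6$ or $7$ precisely when the relevant relations hold. Without this (or an equivalent resolution of the non-saturated moment problem), your upper bound does not close in the $\Gamma(\varrho)=8$ case, nor cleanly in the exceptional case.
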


\begin{proof}
We may assume that $R_\varrho=1$ as before. We may also assume that $\{i_1,i_2\}=\{1,2\}$
by the argument in the proof of Theorem \ref{main-rank-6}.
First, let us show that $\ell(\varrho) \le 8$.
Since $|c_4|\le R_\varrho = 1$, we can take $0\le t\le 1$ and $\theta$ so that $c_4=(2t-1)e^{2{\rm i}\theta}$.
If we put
\begin{equation}\label{rs}
\begin{gathered}
\alpha_1=\sqrt{r}e^{\rm i \theta},\quad
  \alpha_2=\sqrt{r}e^{\rm i \theta}(-t+{\rm i}\sqrt{1-t^2}),\quad
  \alpha_3=\sqrt{r}e^{\rm i \theta}(-t-{\rm i}\sqrt{1-t^2}),\\
\beta_1=\sqrt{s}e^{\rm i \theta},\quad
  \beta_2=\sqrt{s}e^{\rm i \theta}(-t+{\rm i}\sqrt{1-t^2}),\quad
  \beta_3=\sqrt{s}e^{\rm i \theta}(-t-{\rm i}\sqrt{1-t^2}),
\end{gathered}
\end{equation}
and
$$
p_1={t \over 2(1+t)}, \quad p_2=p_3={1 \over 4(1+t)}, \quad q_i=p_i,
$$
then (\ref{kye_eq1}) and (\ref{kye_eq2}) are satisfied. Furthermore, we have
$$
A_\pm={1 \over2}(r \pm s), \qquad B_\pm={1 \over2}(r^{-1} \pm s^{-1}).
$$

In order to prove $\ell(\varrho) \le 8$, it suffices to show that there exist $z$ and $z^\prime$ in $\mathbb C^2$
satisfying
\begin{equation}\label{8-decom}
\varrho = \varrho_1 + |\xi_z \ran \lan \xi_z | + |\eta_{z'}\ran \lan \eta_{z'}|.
\end{equation}
For this purpose, we put
$$
\begin{aligned}
z &=\frac 1{\sqrt 2}\left(\lambda^\ha a_1^\ha b_2^\ha c_1^\ha |r-s|^\ha,\ \pm \lambda^{-\ha}c_2^\ha |r-s|^\ha\right)^\ttt,\\
z'&=\frac 1{\sqrt 2}\left(\mu^{-\ha} c_1^\ha |r^{-1}-s^{-1}|^\ha,\ \mp \mu^\ha a_2^\ha b_1^\ha c_2^\ha |r^{-1}-s^{-1}|^\ha\right)^\ttt,
\end{aligned}
$$
where $\lambda, \mu>0$ and the signs are determined whether $s>r$ or $s<r$ in order.
We compare $\{3,4\}$ and $\{5,6\}$ principle submatrices of (\ref{8-decom}), to get
\begin{equation}\label{hhhhh-1}
\begin{aligned}
\begin{pmatrix}
a_3&0\\
0&a_4
\end{pmatrix}
=&{1 \over 2}
\begin{pmatrix}
a_1b_2(r+s) & a_1^\ha b_2^\ha c_1^\ha \bar c_2^\ha(r-s)\\
a_1^\ha b_2^\ha \bar c_1^\ha c_2^\ha(r-s)&r+s
\end{pmatrix}\\
& \phantom{XXX}+{1 \over 2}
\begin{pmatrix}
a_1b_2\lambda |r-s|& a_1^\ha b_2^\ha c_1^\ha \bar c_2^\ha(s-r)\\
a_1^\ha b_2^\ha \bar c_1^\ha c_2^\ha(s-r)&\lambda^{-1}|r-s|
\end{pmatrix},
\end{aligned}
\end{equation}
and
\begin{equation}\label{hhhhh-2}
\begin{aligned}
\begin{pmatrix}
b_4&0\\
0&b_3
\end{pmatrix}
=&{1 \over 2}
\begin{pmatrix}
r^{-1}+s^{-1} & a_2^\ha b_1^\ha c_1^\ha \bar c_2^\ha(r^{-1}-s^{-1})\\
a_2^\ha b_1^\ha \bar c_1^\ha c_2^\ha(r^{-1}-s^{-1})&a_2b_1(r^{-1}+s^{-1})
\end{pmatrix}\\
& \phantom{XXX}+{1 \over 2}
\begin{pmatrix}
\mu^{-1}|r^{-1}-s^{-1}|& a_2^\ha b_1^\ha c_1^\ha \bar c_2^\ha(s^{-1}-r^{-1})\\
a_2^\ha b_1^\ha \bar c_1^\ha c_2^\ha(s^{-1}-r^{-1})&a_2b_1\mu|r^{-1}-s^{-1}|
\end{pmatrix}.
\end{aligned}
\end{equation}
Comparing the diagonal entries of (\ref{hhhhh-1}) and (\ref{hhhhh-2}),
we see that the existence of $z$ and $z^\prime$ satisfying (\ref{8-decom}) is guaranteed if we
can take $r,s>0$ satisfying
\begin{equation}\label{kkkk}
\begin{aligned}
(r-s)^2&=(r+s-2b_1a_2a_3)(r+s-2a_4),\\
(r-s)^2&=(r+s-2a_1b_2b_3 rs)(r+s-2b_4 rs).
\end{aligned}
\end{equation}
By putting $x=r+s$ and $y=rs$, this is equivalent to
\begin{equation}\label{eq-xy}
(b_1a_2a_3+a_4)x-2y=2b_1a_2a_3a_4, \qquad (a_1b_2b_3+b_4)x-2a_1b_2b_3b_4y=2.
\end{equation}

If $a_3b_3>1$ or $a_4b_4>1$ then the solution is given by
$$
x={2(a_3b_3a_4b_4-1) \over b_4(a_3b_3 -1)+a_1b_2b_3(a_4b_4-1)},\quad
y={a_4(a_3b_3-1)+b_1a_2a_3(a_4b_4-1) \over b_4(a_3b_3 -1)+a_1b_2b_3(a_4b_4-1)}.
$$
We note that both the above $x$ and $y$ are nonnegative. Since
the discriminant
$$
x^2-4y={4(a_3b_3-1)(a_4b_4-1)(a_1b_2b_3a_4-1)(b_1a_2a_3b_4-1) \over (a_3b_3b_4 + a_1b_2b_3a_4b_4 - b_4 -a_1b_2b_3)^2}
$$
is nonnegative, we can take $r$ and $s$ satisfying (\ref{kkkk}).
If $a_3b_3=1$ and $a_4b_4=1$, then $r=s=a_4$ satisfies (\ref{kkkk}),
and this completes the proof of $\ell(\varrho) \le 8$.

We classify non-diagonal separable {\sf X}-states of rank six by the number $\Gamma(\varrho)$ as follows:
\begin{enumerate}
\item[(i)]
$\Gamma(\varrho)=6$; $\sqrt{a_ib_i} = R_\varrho > r_\varrho$ for $i=3,4$
   or $\sqrt{a_ib_i} > R_\varrho = r_\varrho$ for $i=3,4$,
\item[(ii)]
$\Gamma(\varrho)=7$; $\sqrt{a_4b_4} > \sqrt{a_3b_3} = R_\varrho > r_\varrho$ or
   $\sqrt{a_3b_3} > \sqrt{a_4b_4} = R_\varrho > r_\varrho$,
\item[(iii)]
$\Gamma(\varrho)=8$; $\sqrt{a_ib_i} > R_\varrho > r_\varrho$ for $i=3,4$.
\end{enumerate}
We retain our assumption $R_\varrho=1$. If $\sqrt{a_ib_i} = R_\varrho=1$ for $i=3,4$
then we have $a_1b_2b_3a_4=b_1a_2a_3b_4=1$ by $\Delta_\varrho\ge 1$.
Taking $r=s=a_4$ in (\ref{rs}), we have
$$
A_+=a_4,\quad a_1b_2A_+=a_3,\quad B_+=b_4,\quad a_2b_1B_+=b_3,\quad A_-=0=B_-,
$$
which shows $\ell(\varrho)=6$. In case of $R_\varrho=r_\varrho=1$, we write $c_4=e^{2{\rm i}\theta}$ and put
$$
\alpha_1=\sqrt{r}e^{\rm i \theta},\quad
  \alpha_2=-\sqrt{r}e^{\rm i \theta}, \quad
\beta_1=\sqrt{s}e^{\rm i \theta},\quad
  \beta_2=-\sqrt{s}e^{\rm i \theta}
$$
and $p_1=p_2=q_1=q_2={1 \over 4}$,
then (\ref{kye_eq1}) and (\ref{kye_eq2}) are satisfied.
The similar argument of the preceding paragraph establishes $\ell(\varrho)=6$.
This proves that if $\Gamma(\varrho)=6$ then $\ell(\varrho)=6$. If
$\Gamma(\varrho)=8$ then we also have $8= \Gamma(\varrho)\le \ell(\varrho)\le 8$, and so $\ell(\varrho)=8$.

Finally, we consider the case when $\Gamma(\varrho)=7$. We have $|I|\ge 2$ and $|J|\ge 2$ by (\ref{kye_eq1}).
When $|I|=2$, solving algebraic equations
$$
p_1+p_2=1/2, \qquad p_1\alpha_1 + p_2\alpha_2 = 0 = p_1 \alpha_1^{-1}+p_2 \alpha_2^{-1}
$$
yields the solutions
$\alpha_1=-\alpha_2$ and $p_1=p_2=1/4$.
We also see that $\alpha_1\bar\alpha^{-1}_1=c_4$ by (\ref{kye_eq2}), and so
$|c_3|=|c_4|=1$ which implies $R_\varrho=r_\varrho$. Therefore, we have $|I|\ge 3$.
Similarly, we also have $|J|\ge 3$.
Note that $(\varrho_1)_{11}(\varrho_1)_{44}=(\varrho_1)_{22}(\varrho_1)_{33}$ and
$(\varrho_1)_{55}(\varrho_1)_{88}=(\varrho_1)_{66}(\varrho_1)_{77}$.
Therefore, the condition (\ref{exceptional}) implies that $\ell(\varrho)=8$.

Now, it remains to consider the case when $\Gamma(\varrho)=7$ but (\ref{exceptional}) does not hold.
We first consider the case $a_4b_4>a_3b_3=1$. If $a_1a_4=a_2a_3$ then $a_1b_2a_4=a_3$. In this case,
taking $r=s=a_4$ in (\ref{rs}) yields
$$
A_+=a_4,\quad a_1b_2A_+=a_3,\quad B_+=1/a_4,\quad a_2b_1B_+=b_3,\quad A_-=0=B_-.
$$
Therefore, we see that $\varrho-\varrho_1$ is a diagonal state of rank one since $1/a_4<b_4$, and
conclude that $\ell(\varrho)=7$.
If $b_1b_4=b_2b_3$ then $b_1a_2b_4=b_3$, and so we may take $r=s=1/b_4$ to see $\ell(\varrho)=7$.
The case $a_3b_3>a_4b_4=1$ can be done similarly.
\end{proof}

In the case of (ii) in the above proof, the condition $\sqrt{a_ib_i}=R_\varrho$ for $i=1,2,3$ implies that
$$
a_1a_4=a_2a_3\ \Longleftrightarrow\ \sqrt[4]{a_1b_2b_3a_4}=R_\varrho,\qquad
b_1b_4=b_2b_3\ \Longleftrightarrow\ \sqrt[4]{b_1a_2a_3b_4}=R_\varrho.
$$
This is also the case when $\sqrt{a_ib_i}=R_\varrho$ for $i=1,2,4$. Therefore, the last two conditions in
(\ref{exceptional}) may be replaced by
$\sqrt[4]{a_1b_2b_3a_4}>R_\varrho$ and $\sqrt[4]{b_1a_2a_3b_4}>R_\varrho.$


\section{Conclusion}

In this note, we gave separability criteria in terms of
diagonal and anti-diagonal entries to detect three-qubit
entanglement, which depends on phases of anti-diagonal entries.
The main criterion is the inequality
$$
\Delta_\varrho\ge r_\varrho\sqrt{1+|\sin\phi_\varrho/2|}
$$
in terms of the phase difference $\phi_\varrho$. This criterion is strong enough to detect
PPT entanglement of nonzero volume.
Anti-diagonal phases play a role in general to determine positivity
of Hermitian matrices. For example, if we consider the $n\times n$
matrix $[a_{i,j}]$ whose entries are all $1$ except for $a_{1,n}=
\bar a_{n,1}=e^{{\rm i}\theta}$ with $n\ge 3$, then it is positive only
when $\theta=0$. But, they play no role for positivity of {\sf
X}-shaped Hermitian matrices. This means that criterion for
positivity with diagonal and anti-diagonal entries depends only on
the magnitudes of entries.

This is also the case for bi-separability and full bi-separability of multi-qubit {\sf X}-states.
A multipartite
state is called an $(S|T)$ bi-separable if it is separable as a
bi-partite state according to the bi-partition $(S|T)$ for systems.
It is called fully bi-separable if it is $(S|T)$ bi-separable for
every bi-partition $(S|T)$, and bi-separable if it is a mixture of
$(S|T)$ bi-separable states through bi-partitions $(S|T)$ for
systems. Characterization of bi-separability and full bi-separability of multi-qubit {\sf X}-states
depends only on the magnitudes of entries, and is free from the phases \cite{{han_kye_EW},{Rafsanjani}}.
In fact, bi-separability and full bi-separability of multi-qubit {\sf X}-states are equivalent to the
corresponding notions of positivity of partial transposes
\cite{han_kye_EW}. Nevertheless, we have shown that phases of
anti-diagonal entries play a crucial role to determine full
separability of three-qubit {\sf X}-states. In other words,
detecting entanglement with the PPT property depends on the
anti-diagonal phases.
It was shown in \cite{han_kye_tri,han_kye_GHZ} that the anti-diagonal phases also play a role to characterize
three-qubit {\sf X}-shaped entanglement witnesses.

Because our main criterion depends on the diagonal and anti-diagonal entries, it is very natural
to ask  for which {\sf X}-states it gives a sufficient condition for separability. We have shown that
this is the case when the anti-diagonal entries of {\sf X}-states share a common magnitude.
In some extreme cases, the phase difference must be zero, or equivalently
the anti-diagonal entries must satisfy the phase identity
$$
\theta_1+\theta_4=\theta_2+\theta_3 \mod 2\pi
$$
for separability. This is the case when the rank is less than or equal to six. Actually,
we characterize separability of three-qubit {\sf X}-states with rank $\le 6$, using the phase identity.
As a byproduct, we found examples of three-qubit separable states whose length is strictly greater than
the rank of every kind of partial transpose.

It would be interesting to find analogous identities for general multi-qubit cases and determine to what extent these identities give rise to a separability criterion. Finally, it would be also nice to have a calculable characterization of separability of general multi-qubit {\sf X}-shaped states.


\end{document}